\documentclass[12pt, draftclsnofoot, journal, letter, onecolumn]{IEEEtran}

\usepackage{graphicx}
\usepackage{epsfig}
\usepackage{latexsym}
\usepackage{amsfonts}
\usepackage{here}
\usepackage{rawfonts}
\usepackage[latin1]{inputenc}
\usepackage[T1]{fontenc}
\usepackage{calc}
\usepackage{capitalgreekitalic}
\usepackage{url}
\usepackage{enumerate}
\usepackage{color}
\usepackage[tbtags]{amsmath}
\usepackage{amssymb}
\usepackage{upref}
\usepackage{epic,eepic}
\usepackage{times}
\usepackage{dsfont}
\usepackage{comment}
\usepackage{cite}

\newcommand\independent{\protect\mathpalette{\protect\independenT}{\perp}}
\def\independenT#1#2{\mathrel{\rlap{$#1#2$}\mkern2mu{#1#2}}}











\newtheorem{theorem}{{\bf Theorem}}

\newtheorem{lemma}{{\bf Lemma}}

\newcommand{\qed}{\nobreak \ifvmode \relax \else
  \ifdim\lastskip<1.5em \hskip-\lastskip
  \hskip1.5em plus0em minus0.5em \fi \nobreak
  \vrule height0.75em width0.5em depth0.25em\fi}


\newcounter{step}
\newlength{\totlinewidth}
  {\end{list}%
  \rule{\linewidth}{1pt}}
\newcounter{substep}

  {\end{list}}

\newlength{\aligntop}
\setlength{\aligntop}{-0.53em}
\newlength{\alignbot}
\setlength{\alignbot}{-0.85\baselineskip}
\addtolength{\alignbot}{-0.1em} \makeatletter
 \makeatother

\IEEEoverridecommandlockouts


\topmargin = 0em

\begin{document}

\title{Distributed Compute-and-Forward Based Relaying Strategies in Multi-User Multi-Relay Networks}
\author{\authorblockN{Seyed Mohammad Azimi-Abarghouyi, Mohsen Hejazi, and Masoumeh Nasiri-Kenari, \emph{Senior Member, IEEE}}\\
   \thanks{
  The authors are with Electrical Engineering Department, Sharif University of Technology, Tehran, Iran. 
Emails:
\protect\url{sm_azimi@ee.sharif.edu, mhejazi@ee.sharif.edu, mnasiri@sharif.edu}. } }

\maketitle

\begin{abstract}

In this paper, we propose different practical distributed schemes to solve the rank failure problem in the compute and forward (CMF)-based multi-user multi-relay networks without central coordinator, in which the relays have no prior information about each other. First, a new relaying strategy based on CMF, named incremental compute-and-forward (ICMF), is proposed that performs quite well in terms of the outage probability. We show that the distributed ICMF scheme can even outperform the achievable rate of centralized optimal CMF in strong enough inter relay links, with much less complexity. Then, as the second scheme, amplify-forward and compute (AFC) is introduced in which the equations are recovered in the destination rather than in the relays. Finally, ICMF and AFC schemes are combined to present hybrid compute-amplify and forward (HCAF) relaying scheme, which takes advantages of both ICMF, and AFC and improves the performance of the ICMF considerably. We evaluate the performance of the proposed strategies in terms of the outage probability and compare the results with those of the conventional CMF strategy, the Decode and Forward (DF) strategy, and also the centralized optimal CMF. The results indicate the substantial superiority of the proposed schemes compared with the conventional schemes, specially for high number of users and relays.


\end{abstract}


\section{Introduction} 
In a multiuser multi-relay network, the users desire to transfer their messages to a common destination or to different destinations with the help of relays in an efficient and reliable way. To date, most proposed relaying schemes such as amplify-and-forward (AF) and decode-and-forward (DF) perform quite well in the absence of multiuser interference [1-2], where the users transmit in orthogonal channels (for instance, by using Time Division Multiple Access (TDMA)) at the cost of low network throughput. On the other hand, if the users transmit simultaneously, the performance will be degraded due to the multiuser interference or noise amplification. By utilizing network coding along with DF or AF relaying scheme, a combination of the users' messages can be constructed in each relay to improve the system throughput [3-4].

The novel relaying technique, known as compute-and-forward (CMF) [5], has been designed for multiuser applications with the aim of increasing the network throughput. In this scheme, based on a noisy received combination of simultaneously transmitted signals of the users, each relay attempts to recover a linear integer-combination of the users' messages (an equation), instead of recovering each individual message separately. To enable the relay to recover the equation, the CMF scheme is usually implemented based on using proper lattice codes [6]. An attractive characteristic of CMF scheme is that the channel state information (CSI) is not needed in the transmitters, which makes it practical for most applications. The recovered equations by the relays are then forwarded to the common destination that attempts to solve and recover the users' messages. In fact, the CMF method exploits rather than combats the interference, towards a better network performance. 

CMF method has been considered and studied in multi antenna systems [7], two way relaying systems [8-9], cooperative distributed antenna systems [10], multi-access relay channels [11], generalized multi-way relay channels [12], and two transmitter multi-relay systems [13]. However, among the most important challenges of the CMF is the rank failure problem which has not been solved yet. Since each relay selects its equation coefficients independently to maximize its own rate, the equations that are received from different relays at the destinations can be linearly dependent. In other words, the coefficient matrix of the equations received by the destinations can encounter rank failure. In this case, the destination cannot recover the users' messages and the system performance deteriorates considerably [13]. In the method proposed in [14], each relay recovers its best $T_{\text{max}}$ equations with the highest rates and then sends them to the destination, each equation in a time slot. These equations are not necessarily independent. Among the equations received from the relays, the destination selects $L$ ($L$ is the number of transmitted messages) independent equations, if any, with highest rate in order to recover the messages. However, this method can also encounter rank failure, although its probability reduces with the increase of ${T_{\text{max}}}$ at the cost of very high required time slots from relays to destination. The rank failure problem is mostly considered when there exists a central coordinator with a global CSI that computes independent equations with a maximum computation rate (the equation detecting rate), then allocates the equations to the relays [10]. On the other hand, most AF- and DF-based strategies, which do not encounter the rank failure problem, have only a simple timer implemented in each relay to coordinate different relays [15-16].

In this paper, we propose novel distributed strategies to handle the rank failure problem for a general multi-user multi-relay wireless network at the absence of a central coordinator. As demonstrated, the proposed schemes are based on simple timer on each relay and can compete with the conventional AF and DF schemes practically. Our contributions are as follows;

1. We propose a relaying strategy based on CMF, named "incremental compute-and-forward" (ICMF), in which the linearly independent equations are recovered one by one through cooperation among the relays using a simple timer in each relay. For the first time, we propose a general distributed successive method for recovering different number of equations. We provide an algorithm (Algorithm 1) that can be implemented in multi-relay scenarios with low complexity. We present a receiver structure for our scheme, and propose limited search area for its integer optimization problem (Lemma 1). 

Here, though we use the same concept as in [17] for creating effective channels in our scheme, the successive CMF presented in [17] is a one-stage successive equation computator implemented in one relay. In fact, this scheme is proposed for recovering an equation with the help of another decoded equation at a rate higher than the CMF.

2. We prove that despite of its much less complexity, the ICMF with sufficiently strong inter-relays channels outperforms the achievable rate of the optimal centralized CMF scheme with global knowledge (Theorem 2). 

3. We introduce "amplify-forward-and-compute" (AFC), based on using the conventional amplify-and-forward relaying method and the integer-forcing linear receiver (IFLR) introduced by Zhan, et al [7]. In AFC, each relay simply amplifies its received combination of the users' noisy signals and forwards the result to the destination. Then, the destination recovers all the required equations. Hence, the relay structure in this scheme is considerably simpler than those in CMF and ICMF. In AFC method, the destination acts as a computation center, while in the CMF and the ICMF, the computation (recovering the equations) is performed by the relays in a distributed manner. Here, we borrow the reciever structure for IFLR from [7], with slight modification to be matched with the amplified received signals. We also introduce limited search area for its integer optimization problem.

4. We introduce "hybrid compute-amplify and forward" (HCAF) scheme, based on the combination of ICMF and AFC schemes. In this strategy, first the linearly independent equations are successively recovered by the relays based on the ICMF scheme till the stage at which the maximum computation rate derived by the relay is less than the target rate. Then for the rest of the equations, the AFC technique is used and the related equations are recovered by the destination. The destination exploits the equations transmitted from the computing relays to recover the rest of equations with higher rate. A new receiver structure and limited search area for the integer optimization problem of the HCAF scheme are presented as well.

We evaluate the performance of the proposed schemes, in terms of outage probability, and compare the results with the CMF and the DF relaying schemes. Our numerical results show that the ICMF and AFC strategies outperform the CMF method significantly and provide higher diversity orders. The proposed ICMF slightly outperforms the centralized optimal CMF with global knowledge at strong inter-relays channels, while holds an acceptable performance degradation compared with the optimal CMF at very poor inter-relays channels. Although the AFC strategy performs worse than the ICMF, especially when the links between the relays are strong, it has much less complexity. HCAF improves the performance of the ICMF at the cost of more complex receiver structure at the destination. Our proposed schemes lead to substantially less outage probability compared to the conventional DF scheme. Moreover, the performance gain of the proposed schemes increases with the number of users and relays.

The reminder of the paper is organized as follows. In Section II, the system model and the conventional CMF strategy are described. The proposed methods, namely ICMF, AFC, and HCAF, are presented in Section III, Section IV, and Section V, respectively. Numerical results are given in Section VI. Finally, Section VII concludes the paper.

\textbf{Notations:} The operators ${(\mathbf{A})^*}$, $(\mathbf{A})^T$, $||\mathbf{A}||$, and $\text{span}(\mathbf{A})$ stand for conjugate transpose, transpose, frobenius norm, and the space constructed from the column vectors of matrix $\mathbf{A}$, respectively. The symbol $\left| x \right|$ is the absolute value of the scalar $x$, while ${\rm{lo}}{{\rm{g}}^ + }\left( x \right)$ denotes ${\rm{max}}\left\{ {\log \left( x \right),0} \right\}$. $\mathbb{E}\{\cdot\}$ is the expectation operator and $\independent$ indicates the linear independency of vectors. $\mathbf{I}$ denotes identity matrix.

\section{System Model and Related Work} 
\subsection{System Model} We consider a multi-user multi-relay cooperative network, shown in Fig. 1 [18], consisting of  $L$ users, $M$ relays and one common destination. There are no direct links between the users and the destination. Each user $i$ exploits a lattice encoder, with power constraint $P_T$, to map its message $w_i$ to a complex-valued codeword $x_i$ of length $n$ with ${\left| {\left| {{x_i}} \right|} \right|^2} \le n{P_T}$. We denote the received signals at relay $m$ by $y_m^r$ and at the destination from the relay $m$ by $y_m$. The power constraint at each relay is $P_R$. The element $h_{im}$ of the channel matrix $\mathbf{H}$ represents the channel coefficient from user $i$ to relay $m$, $f_m$ indicates the channel coefficient from relay $m$ to the destination, and $g_{ab}$ denotes the channel coefficient from relay $a$ to relay $b$. The channel coefficient $h_{im}$, $f_{m}$, and $g_{ab}$ are assumed to be independent complex Gaussian distributed random variables with the variances $\sigma _{im}^2$, $\sigma _m^2$, and $\sigma _{r,ab}^2$, respectively. Moreover, block fading is assumed, where the channels are considered to be constant during the transmission periods required for message exchanges. We assume that each relay has only information about its own channels and is not aware of the other relays' channel states. 

At the first time slot, all $L$ users transmit their codewords simultaneously to the relays. In the following $L$ or $M$ time slots, depending on the schemes used, $L$ or $M$ signals are transmitted by the relays, each in separate dedicated slot. The received signals $y_m^r$ and $y_m$ at the relay $m$ and at the destination can respectively be written as,
\begin{eqnarray}
y_m^r = \mathop \sum \limits_{i = 1}^L {h_{im}}{x_i} + {z_m}
\end{eqnarray} 		
\begin{eqnarray}
{y_m} = {f_m}x_m^r + {\eta _m}
\end{eqnarray}													
where $z_m$ and ${\eta _m}$ are independent additive white Gausian noises with identical variances equal to $N_0$, and $x_m^r$ denotes the signal transmitted by the relay $m$.
\begin{figure}[tb!]
\centering

\includegraphics[width =5in]{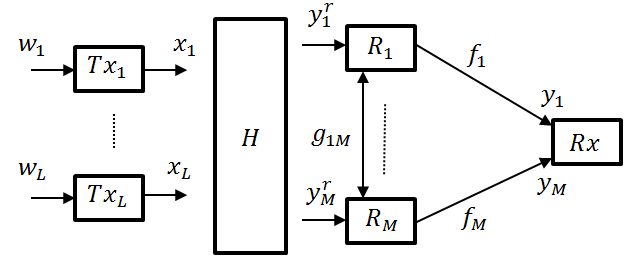}

\caption{Multi-User Multi-Relay Cooperative Network}

\end{figure}

\subsection{Conventional Compute-and-Forward Strategy} In the conventional compute-and-forward (CMF) method [5,13], in the first time slot, all the $L$ users transmit their own codewords $x_{i}$, $i = 1, \ldots ,{\rm{ }}L$, simultaneously to the relays. Based on its received signal $y_m^r$, each relay $m$, $m=1,...,M$, attempts to detect an equation $s_m$, a linear combination of users' codewords, with complex integer equation coefficients vector (ECV) ${\mathbf{a}_m} = {\left[ {{a_{1m}}, \ldots ,{a_{Lm}}} \right]^*} \in {\left( {\mathbb{Z} + i\mathbb{Z}} \right)}^L$, i.e., ${s_m} = \mathop \sum \limits_{i = 1}^L {a_{im}}{x_i} = \mathbf{a}_m^*\mathbf{X}$, where the vector $\mathbf{X} = {\left[ {{x_1}, \ldots ,{x_L}} \right]^*}$ includes the codewords of all users.
The coefficient vector in each relay is selected based on maximizing the relay's computation rate, i.e., the rate of detecting the equation $s_m$, as follows [5]:
\begin{eqnarray}
{\mathbf{a}_m} = arg{\rm{mi}}{{\rm{n}}_{{\mathbf{a}_l} \in {{\left( {\mathbb{Z}+ i\mathbb{Z}} \right)}^L},{\mathbf{a}_l} \ne 0}}\left( {\mathbf{a}_l^*{\mathbf{H}_m}{\mathbf{a}_l}} \right)
\end{eqnarray}
where $\text{SNR}_T = {P_T}/{N_0}$. The vector $\mathbf{h}_m$ and the matrix $\mathbf{H}_m$ are defined as,
\begin{eqnarray}
{\mathbf{h}_m} \buildrel \Delta \over = {\left[ {{h_{1m}}, \ldots ,{h_{Lm}}} \right]^*}
\end{eqnarray}
\begin{eqnarray}
{\mathbf{H}_m} \buildrel \Delta \over = \mathbf{I} - \frac{{\text{SNR}_T}}{{1 + \text{SNR}_T{{\left| {\left| {{\mathbf{h}_m}} \right|} \right|}^2}}}{\mathbf{h}_m}\mathbf{h}_m^*
\end{eqnarray}
An efficient algorithm for solving the above integer optimization problem has been proposed in [15].  To detect the equation $s_m$, the relay $m$ quantizes the scaled received signal ${\alpha _m}y_m^r$ to its nearest lattice point ${s_m} = Q\left( {{\alpha _m}y_m^r} \right)$, where [5]
\begin{eqnarray}
{\alpha _m} = \frac{{\text{SNR}_T\mathbf{h}_m^*{\mathbf{a}_m}}}{{1 + \text{SNR}_T{{\left| {\left| {{\mathbf{h}_m}} \right|} \right|}^2}}}
\end{eqnarray}									
and $Q(.)$ denotes the lattice quantizer function. The achievable computation rate of $s_m$ is equal to [5]:	
\begin{eqnarray}
{r_m} = {\log ^ + }({\left( {\mathbf{a}_m^*{\mathbf{H}_m}{\mathbf{a}_m}} \right)^{ - 1}}).
\end{eqnarray}
The $M$ equations, $s_m, m=1,...,M$, each independently detected by one of the relays, are then orthogonally transmitted with power $P_r$ to the destination in the next $M$ consecutive time slots.
Since the channel from the relay $m$ to the destination is a simple point-to-point channel, according to (2), the transmission rate over this channel is,
\begin{eqnarray}
{\tilde r_m} = {\rm{log}}\left( {1 + \text{SNR}_R{{\left| {{f_m}} \right|}^2}} \right)
\end{eqnarray}											
where $\text{SNR}_R = {P_R}/{N_0}$. This rate is achievable by using the CMF strategy in one user case [5].
Hence, the overall rate for recovering equation $s_m$ at the destination is determined by,
\begin{eqnarray}
{R_m} = \min \left( {{r_m},{{\tilde r}_m}} \right).
\end{eqnarray}
									
The destination receives $M$ equations from the relays. To recover the users' messages, the destination should select $L$ equations from these $M$ equations. This can be done in $\left( {\begin{array}{*{20}{c}}M\\L\end{array}} \right)$ different ways. Let $S_u$ denotes the $u$-th set of selected equations, i.e.
\begin{eqnarray}
{S_u} = \left\{ {{u_1}, \ldots ,{u_L}} \right\};u \in \left\{ {1,2, \ldots ,\left( {\begin{array}{*{20}{c}}M\\L\end{array}} \right)} \right\}
\end{eqnarray}
where $u_i$ indicates the $i$-th equation in the set $S_u$, and the matrix $\mathbf{A}_u$ denotes the ECVs corresponding to the equations in $S_u$. For each set $S_u$, the symmetric achievable rate ${R_{{s_u}}}$, i.e. the rate of recovering all the messages, is 
\begin{eqnarray}
R_{s_u} =\left\{ {\begin{array}{*{20}{c}}{\min( R_{u_1}, \ldots , R_{u_L} ) \hspace{20pt}, \hspace{5pt} rank( \mathbf{A}_u ) = L}\\ 0 \hspace{70pt} , \hspace{20pt} O.W. \end{array}} \right. 
\end{eqnarray}

It is noteworthy that if the ranks of all possible sets of equations $S_u$ are less than $L$, a rank failure is occurred and the destination cannot recover the messages; which leads to an outage event. Therefore, the achieved rate of the CMF method can be written as
\begin{eqnarray}
R_{\text{CMF}} = \frac{L}{M + 1}\rm{max}\left( R_{s_1}, \ldots ,R_{s_{\small{\left( {\begin{array}{*{20}{c}}M\\L\end{array}}\right)}}} \right),
\end{eqnarray}
where the coefficient $\frac{L}{{M + 1}}$ is due to the fact that in CMF method the transmission and the recovering of the users' messages at the destination take place in $M+1$ time slots.

\section{Incremental Compute-and-Forward (ICMF)}
In this method, $L$ independent equations with the highest computation rates are recovered and sent to the destination through the cooperation among the relays in a distributed manner. First, each relay calculates its own overall computation rate (using (9)). Then, the relay with the highest computation rate transmits its recovered equation to the destination, which is received by the other relays as well. For the second equation, each relay again computes another equation independent from the first one, with the maximum computation rate. Among them, the relay with the highest second computation rate transmits its derived equation, which is again received by the other relays as well. This process is repeated until all $L$ equations are derived and transmitted to the destination. That is, to recover a new equation, each relay finds an equation with the maximum rate, which is linearly independent of the previously computed and transmitted equations, and then the relay with the highest rate at that stage is selected to transmit the new recovered equation, as described in the following.  In each relay, the previously received equations are exploited in each stage to increase the rate of recovering the new equation, the concept first introduced in [17].

Specifically, the ICMF can be described as follows. At stage $k$, the $k$-th best equation is recovered and transmitted to the destination in the corresponding time slot, as follows. Each relay knows the $k-1$ best equations, $s_{\text{max}}^1$,$\ldots$,$s_{\text{max}}^{k-1}$ , that are transmitted in the previous $k-1$ time slots simply by listening and detecting the signals transmitted in the earlier slots. In our performance evaluation, we consider the possible failure at each relay in detecting the $k-1$ previously transmitted equations. Let's define the matrices
\begin{eqnarray}
{\mathbf{E}_k} = \left[ {\begin{array}{*{20}{c}}{\mathbf{e}_1^*}\\ \vdots \\{\mathbf{e}_{k - 1}^*}\end{array}} \right],{\mathbf{S}_k} = \left[ {\begin{array}{*{20}{c}}{{s_{\text{max}}^1}}\\ \vdots \\{{s_{\text{max}}^{k-1}}}\end{array}} \right]
\end{eqnarray}
where $\mathbf{e}_i$,$i=1$,$\ldots$,$k-1$, is the ECV of the $i$-th transmitted equation. Therefore, we can write,
\begin{eqnarray}
{\mathbf{S}_k} = {\mathbf{E}_k}\mathbf{X}
\end{eqnarray}
where $\mathbf{X} = {\left[ {{x_1}, \ldots ,{x_L}} \right]^*}$. Assume that the equation $s_{\text{max}}^j, j=1,...,k-1$, is computed and transmitted by relay $n_j$. The rate of receiving this equation at relay $m$ is,
\begin{eqnarray}
r_m^{{e_j}} = {\rm{min}}\left( {r_{{n_j}}^j,{r_{m{n_j}}}} \right)
\end{eqnarray}
where ${r_{m{n_j}}}$ is the rate of the point-to-point channel between relays $m$ and $n_j$ as follows:
\begin{eqnarray}
{r_{m{n_j}}} = {\rm{log}}\left( {1 + \text{SNR}_R{{\left| {{g_{m{n_j}}}} \right|}^2}} \right)
\end{eqnarray}
$r_{{n_j}}^j$ is the computation rate of recovering this equation in relay $n_j$, will be given in (27). Hence, the overall achievable rate of all of the $k-1$ previously transmitted equations at relay $m$ is,
\begin{eqnarray}
r_m^{e,k} = {\rm{mi}}{{\rm{n}}_{j = 1, \ldots ,k - 1}}\left( {r_m^{{e_j}}} \right),
\end{eqnarray}
Now, e.g. the $m$-th, relay attempts to recover a new equation based on the $k-1$ equations correctly received in the previous time slots and its own received signal $y_m^r$.
First, the effect of previous equations is removed from the received signal $y_m^r$ using projection space method [19] as:
\begin{eqnarray}
\hat y_m^k = y_m^r - \mathbf{h}_m^*\mathbf{E}_k^*{\left( {{\mathbf{E}_k}\mathbf{E}_k^*} \right)^{ - 1}}{\mathbf{S}_k}
\end{eqnarray}
This makes the optimum ECV derivation simpler, which will be shown later in Theorem 1. Then, a controlled and desired linear combination of this signal and the previously derived equations is made as follows: 
\begin{eqnarray}
\tilde y_m^k = \beta _m^k\hat y_m^k + \mathbf{c}{{_m^k}^*}{\mathbf{S}_k}.
\end{eqnarray}
The coefficients of the linear combination (19) are selected in a way to maximize the computation rate of the relay (the rate of recovering an equation from signal $\tilde y_m^k$) that is independent from the $k-1$ previously transmitted equations. 

From (18-19), to recover the equation $\mathbf{a}_l^*\mathbf{X}$ from $\tilde y_m^k$, we rewrite (19) as,
\begin{eqnarray}
\tilde y_m^k = \mathbf{a}_l^*\mathbf{X} + \left( {\beta _m^k\mathbf{g}{{_m^k}^*} + \mathbf{c}{{_m^k}^*}{\mathbf{E}_k} - \mathbf{a}_l^*} \right)\mathbf{X} + \beta _m^k{z_m}
\end{eqnarray}
where $\mathbf{g}_m^k$  is defined as:
\begin{eqnarray}
\mathbf{g}{{_m^k}^*} \buildrel \Delta \over = {\mathbf{h}_m}^*\left( {\mathbf{I} - \mathbf{E}_k^*{{\left( {{\mathbf{E}_k}\mathbf{E}_k^*} \right)}^{ - 1}}{\mathbf{E}_k}} \right)
\end{eqnarray}
The effective noise variance for this equation is,
\begin{eqnarray}
{N_{\text{eq}}} =\mathbb{E}\left\{ {{{\left| {\tilde y_m^k - \mathbf{a}_l^*\mathbf{X}} \right|}^2}} \right\}= {\left| {\beta _m^k} \right|^2} + \text{SNR}_T{\left| {\left| {\beta _m^k\mathbf{g}_m^k + \mathbf{E}_k^*\mathbf{c}_m^k - {\mathbf{a}_l}} \right|} \right|^2}
\end{eqnarray}
where $\text{SNR}_T = {P_T}/{N_0}$. Hence, the computation rate of this equation according to the above effective noise variance is given by\footnote{Note that an equation with message transmission power of $P$ and effective recovery noise variance of $N_{\text{eq}}$ has computation rate equal to ${\rm{lo}}{{\rm{g}}^ + }\left( {\frac{P}{{{N_{\text{eq}}}}}} \right)$ [5].}
\begin{eqnarray}
{\rm{r}}_m^k = {\rm{lo}}{{\rm{g}}^ + }\left( {\frac{{\text{SNR}_T}}{{{{\left| {\beta _m^k} \right|}^2} + \text{SNR}_T{{\left| {\left| {\beta _m^k\mathbf{g}_m^k + \mathbf{E}_k^*\mathbf{c}_m^k - {\mathbf{a}_l}} \right|} \right|}^2}}}} \right)
\end{eqnarray}

From (23), to obtain the maximum computation rate, we should solve the following optimization problem:
\begin{eqnarray}
\mathop {\max }\limits_{\beta _m^k,\mathbf{c}_m^k} {\rm{lo}}{{\rm{g}}^ + }\left( {\frac{{\text{SNR}_T}}{{{{\left| {\beta _m^k} \right|}^2} + \text{SNR}_T{{\left| {\left| {\beta _m^k\mathbf{g}_m^k + \mathbf{E}_k^*\mathbf{c}_m^k - {\mathbf{a}_l}} \right|} \right|}^2}}}} \right)
\end{eqnarray}
The following theorem presents the solution of this optimization problem:
\begin{theorem}
In stage $k$, the optimum values of $\beta _m^k$ and vector $\mathbf{c}_m^k$ for recovering the equation with coefficient vector $\mathbf{a}_l$ at relay $m$ are, respectively,
\begin{eqnarray}
\beta _{m,\text{opt}}^k = \frac{\mathbf{g}{{_m^k}^*} {\mathbf{a}_l}} {\frac{1}{{\text{SNR}_T}} + {|| \mathbf{g}_m^k ||^2}}
\end{eqnarray}
\begin{eqnarray}
\mathbf{c}{{_{m,\text{opt}}^k}^*} = {\mathbf{a}_l^*}{\mathbf{E}_k^*}{\left( {\mathbf{E}_k}{\mathbf{E}_k^*} \right)^{ - 1}}
\end{eqnarray}
\end{theorem}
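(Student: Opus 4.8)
The plan is to treat the maximization in (24) as an unconstrained optimization over the complex scalar $\beta_m^k$ and the complex vector $\mathbf{c}_m^k$, and to exploit the fact that the objective is monotone decreasing in the denominator $N_{\text{eq}} = |\beta_m^k|^2 + \text{SNR}_T \| \beta_m^k \mathbf{g}_m^k + \mathbf{E}_k^* \mathbf{c}_m^k - \mathbf{a}_l \|^2$, so it suffices to minimize $N_{\text{eq}}$ jointly over $\beta_m^k$ and $\mathbf{c}_m^k$. First I would minimize over $\mathbf{c}_m^k$ with $\beta_m^k$ held fixed: only the second term depends on $\mathbf{c}_m^k$, and it is a least-squares problem, namely finding the vector $\mathbf{E}_k^* \mathbf{c}_m^k$ in the row space of $\mathbf{E}_k$ closest to the target $\mathbf{a}_l - \beta_m^k \mathbf{g}_m^k$. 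The key structural observation is that, by the definition (21), $\mathbf{g}_m^k$ is the projection of $\mathbf{h}_m$ onto the orthogonal complement of the row space of $\mathbf{E}_k$, hence $\mathbf{E}_k \mathbf{g}_m^k = \mathbf{0}$; therefore the $\beta_m^k \mathbf{g}_m^k$ component is already orthogonal to everything reachable by $\mathbf{E}_k^* \mathbf{c}_m^k$, and the optimal $\mathbf{c}_m^k$ simply cancels the component of $\mathbf{a}_l$ lying in the row space of $\mathbf{E}_k$. Setting the gradient with respect to $\mathbf{c}_m^k$ to zero (or invoking the projection theorem directly) yields the normal equations $\mathbf{E}_k \mathbf{E}_k^* \mathbf{c}_m^k = \mathbf{E}_k \mathbf{a}_l$, which gives $\mathbf{c}_{m,\text{opt}}^{k*} = \mathbf{a}_l^* \mathbf{E}_k^* (\mathbf{E}_k \mathbf{E}_k^*)^{-1}$ as claimed in (26). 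Here I assume $\mathbf{E}_k \mathbf{E}_k^*$ is invertible, which holds since the previously recovered equations are linearly independent by construction of the ICMF scheme.

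Substituting this optimal $\mathbf{c}_m^k$ back, the residual $\mathbf{E}_k^* \mathbf{c}_{m,\text{opt}}^k - \mathbf{a}_l$ becomes $-(\mathbf{I} - \mathbf{E}_k^*(\mathbf{E}_k \mathbf{E}_k^*)^{-1}\mathbf{E}_k)\mathbf{a}_l$, i.e. minus the projection of $\mathbf{a}_l$ onto the orthogonal complement of the row space of $\mathbf{E}_k$; call this $-\mathbf{a}_l^\perp$. Because $\mathbf{g}_m^k$ also lies in that orthogonal complement, $N_{\text{eq}}$ collapses to $|\beta_m^k|^2 + \text{SNR}_T \| \beta_m^k \mathbf{g}_m^k - \mathbf{a}_l^\perp \|^2$. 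Then I would minimize this over the scalar $\beta_m^k$: it is a quadratic in $\beta_m^k$, and differentiating (treating $\beta_m^k$ and its conjugate formally, or expanding in real and imaginary parts) and setting the derivative to zero gives $\beta_m^k (1 + \text{SNR}_T \|\mathbf{g}_m^k\|^2) = \text{SNR}_T\, \mathbf{g}_m^{k*} \mathbf{a}_l^\perp$. Finally I would note that $\mathbf{g}_m^{k*}\mathbf{a}_l^\perp = \mathbf{g}_m^{k*}\mathbf{a}_l$, again because $\mathbf{g}_m^k$ is orthogonal to the row space of $\mathbf{E}_k$ so projecting $\mathbf{a}_l$ does not change the inner product; dividing through by $\text{SNR}_T$ recovers exactly (25).

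The main obstacle — really the only subtle point — is organizing the two-stage minimization so that the cross terms vanish cleanly; this hinges entirely on the identity $\mathbf{E}_k \mathbf{g}_m^k = \mathbf{0}$ and on recognizing that $\mathbf{I} - \mathbf{E}_k^*(\mathbf{E}_k\mathbf{E}_k^*)^{-1}\mathbf{E}_k$ is an orthogonal projector (idempotent and self-adjoint), so that it can be moved freely between the two factors of an inner product. Once this is in place the rest is routine: one can either present it as a sequential optimization ($\mathbf{c}_m^k$ first, then $\beta_m^k$) or write the joint stationarity conditions and solve the resulting $2\times 2$-block linear system. A minor technical check worth including is that the stationary point is indeed the global minimizer of $N_{\text{eq}}$, which follows from convexity of $N_{\text{eq}}$ as a sum of squared Euclidean norms in $(\beta_m^k, \mathbf{c}_m^k)$.
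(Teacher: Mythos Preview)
Your proposal is correct and essentially mirrors the paper's own proof: the paper expands $N_{\text{eq}}$, uses the identity $\mathbf{g}_m^{k*}\mathbf{E}_k^*\mathbf{c}_m^k = 0$ (equivalent to your $\mathbf{E}_k\mathbf{g}_m^k=\mathbf{0}$) to decouple the cross term, and then sets the partial derivatives with respect to $\beta_m^k$ and $\mathbf{c}_m^k$ separately to zero, obtaining (25) and (26). The only cosmetic difference is that the paper writes the joint stationarity conditions directly rather than your sequential minimization, an alternative you already note yourself.
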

\begin{proof}
The proof is given in Appendix I.
\end{proof}

By substituting (25) and (26) in (23), the computation rate of the equation with the coefficient vector $\mathbf{a}_l$, at relay $m$ and at stage $k$, is computed as
\begin{eqnarray}
r_m^k = \text{min}\{ \log^ + ({(\mathbf{a}_l^*\mathbf{V}_m^k{\mathbf{a}_l})^{- 1}}),r_m^{e,k}\} 
\end{eqnarray}
where $r_m^{e,k}$ is given in (17) and
\begin{eqnarray}
\mathbf{V}_m^k \buildrel \Delta \over =\mathbf{I} - \frac{{\mathbf{g}_m^k\mathbf{g}{{_m^k}^*}}}{{\frac{1}{{\text{SNR}_T}} + {{\left| {\left| {\mathbf{g}_m^k} \right|} \right|}^2}}} - \mathbf{E}_k^*{\left( {{\mathbf{E}_k}\mathbf{E}_k^*} \right)^{ - 1}}{\mathbf{E}_k}
\end{eqnarray}
The relay $m$, at stage $k$, has to find the equation with the highest possible rate in (27) that is linearly independent from the previous $k-1$ equations, i.e., $\mathbf{e}_1$,$\ldots $,$\mathbf{e}_{k-1}$. Hence, from (27), the relay $m$ finds its optimum ECV based on the following optimization problem
\begin{eqnarray}
\mathbf{a}_m^k = {\rm{mi}}{{\rm{n}}_{{\mathbf{a}_l} \in {{\left( {\mathbb{Z} + i\mathbb{Z}} \right)}^L}}}\left( {\mathbf{a}_l^*\mathbf{V}_m^k{\mathbf{a}_l}} \right)\nonumber
\end{eqnarray}
subject to
\begin{eqnarray}
{\rm{\text{rank}}}\left( {\left[ {{\mathbf{a}_l},{\mathbf{e}_1}, \ldots ,{\mathbf{e}_{k - 1}}} \right]} \right) = k
\end{eqnarray}
Through the following lemma, we can omit the useless ECVs, the computation rates of which are zero. The Lemma is of interest because it gives a very smaller search area for solving (29).
\begin{lemma}
To find the optimum ECV $\mathbf{a}_m^k$ in problem (29), it is sufficient to check the space of all integer vectors $\mathbf{a}_l$ with norm satisfying
\begin{eqnarray}
{\left| {\left| {{\mathbf{a}_l}} \right|} \right|^2} \le \frac{1}{{\frac{{\frac{1}{{\text{SNR}_T}}}}{{\frac{1}{{\text{SNR}_T}} + {{\left| {\left| {\mathbf{g}_m^k} \right|} \right|}^2}}} - \left| {\left| {\mathbf{E}_k^*{\left( {{\mathbf{E}_k}\mathbf{E}_k^*} \right)^{ - 1}}{\mathbf{E}_k}} \right|} \right|}}
\end{eqnarray}
\end{lemma}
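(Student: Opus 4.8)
The plan is to prove that every ECV $\mathbf{a}_l$ violating the norm bound in (31) has computation rate zero, so it may be discarded when solving (29). From (27) the rate of $\mathbf{a}_l$ at relay $m$ and stage $k$ is $\min\{\log^+((\mathbf{a}_l^*\mathbf{V}_m^k\mathbf{a}_l)^{-1}),\,r_m^{e,k}\}$; since $r_m^{e,k}$ does not depend on $\mathbf{a}_l$, an ECV produces a strictly positive rate only when $\mathbf{a}_l^*\mathbf{V}_m^k\mathbf{a}_l<1$, and among the rank-increasing ECVs the rate is maximized by minimizing $\mathbf{a}_l^*\mathbf{V}_m^k\mathbf{a}_l$ (and if no rank-increasing ECV attains a positive rate, the choice among them is immaterial). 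Observe that for a rank-increasing $\mathbf{a}_l$ the quadratic form is genuinely positive, because the kernel of $\mathbf{V}_m^k$ is exactly $\mathrm{span}(\mathbf{e}_1,\ldots,\mathbf{e}_{k-1})$, the subspace forbidden by the rank constraint. Hence it suffices to show that $\mathbf{a}_l^*\mathbf{V}_m^k\mathbf{a}_l<1$ forces (31).

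The core step is a lower bound of the form $\mathbf{a}_l^*\mathbf{V}_m^k\mathbf{a}_l\ge c\,\|\mathbf{a}_l\|^2$. Using (28) I would split $\mathbf{V}_m^k=\mathbf{I}-\mathbf{Q}-\mathbf{P}$ with $\mathbf{Q}=\mathbf{g}_m^k(\mathbf{g}_m^k)^*/(\tfrac{1}{\text{SNR}_T}+\|\mathbf{g}_m^k\|^2)$ rank-one positive semidefinite and $\mathbf{P}=\mathbf{E}_k^*(\mathbf{E}_k\mathbf{E}_k^*)^{-1}\mathbf{E}_k$ the orthogonal projector onto $\mathrm{span}(\mathbf{e}_1,\ldots,\mathbf{e}_{k-1})$, so that
\begin{eqnarray}
\mathbf{a}_l^*\mathbf{V}_m^k\mathbf{a}_l & = & \|\mathbf{a}_l\|^2-\mathbf{a}_l^*\mathbf{Q}\mathbf{a}_l-\mathbf{a}_l^*\mathbf{P}\mathbf{a}_l.\nonumber
\end{eqnarray}
The rank-one term satisfies $\mathbf{a}_l^*\mathbf{Q}\mathbf{a}_l\le\lambda_{\max}(\mathbf{Q})\|\mathbf{a}_l\|^2=\tfrac{\|\mathbf{g}_m^k\|^2}{\frac{1}{\text{SNR}_T}+\|\mathbf{g}_m^k\|^2}\|\mathbf{a}_l\|^2$, so $\|\mathbf{a}_l\|^2-\mathbf{a}_l^*\mathbf{Q}\mathbf{a}_l\ge\tfrac{1/\text{SNR}_T}{\frac{1}{\text{SNR}_T}+\|\mathbf{g}_m^k\|^2}\|\mathbf{a}_l\|^2$. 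For the projection term I would use Cauchy--Schwarz in the Frobenius inner product, $\mathbf{a}_l^*\mathbf{P}\mathbf{a}_l=\Tr(\mathbf{P}\,\mathbf{a}_l\mathbf{a}_l^*)\le\|\mathbf{P}\|\,\|\mathbf{a}_l\mathbf{a}_l^*\|=\|\mathbf{E}_k^*(\mathbf{E}_k\mathbf{E}_k^*)^{-1}\mathbf{E}_k\|\,\|\mathbf{a}_l\|^2$, using $\|\mathbf{a}_l\mathbf{a}_l^*\|=\|\mathbf{a}_l\|^2$ for the Frobenius norm. Combining,
\begin{eqnarray}
\mathbf{a}_l^*\mathbf{V}_m^k\mathbf{a}_l & \ge & \left(\frac{\tfrac{1}{\text{SNR}_T}}{\tfrac{1}{\text{SNR}_T}+\|\mathbf{g}_m^k\|^2}-\|\mathbf{E}_k^*(\mathbf{E}_k\mathbf{E}_k^*)^{-1}\mathbf{E}_k\|\right)\|\mathbf{a}_l\|^2.\nonumber
\end{eqnarray}

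To finish, impose $\mathbf{a}_l^*\mathbf{V}_m^k\mathbf{a}_l<1$: when the bracketed constant is positive, dividing through yields exactly (31), so every positive-rate ECV --- in particular the optimizer of (29) whenever it has positive rate --- lies in that ball and the zero-rate ECVs outside it are useless; when the bracketed constant is nonpositive, (31) is vacuous. I expect the main obstacle to be treating the projection term $\mathbf{a}_l^*\mathbf{P}\mathbf{a}_l$ correctly: bounding it by the operator norm of $\mathbf{P}$ (which equals $1$) would collapse the estimate, so one must retain the Frobenius-norm bound, which is exactly why the denominator in (31) carries $\|\mathbf{E}_k^*(\mathbf{E}_k\mathbf{E}_k^*)^{-1}\mathbf{E}_k\|$ rather than $1$. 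The remaining ingredients --- that $\mathbf{V}_m^k\succeq0$ with kernel $\mathrm{span}(\mathbf{e}_1,\ldots,\mathbf{e}_{k-1})$, so the rate is governed purely by whether the quadratic form is below $1$, together with the degenerate and boundary cases --- are routine bookkeeping.
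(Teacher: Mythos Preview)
Your proposal is correct and follows essentially the same route as the paper: expand $\mathbf{a}_l^*\mathbf{V}_m^k\mathbf{a}_l$ via (28), bound each of the two subtracted terms by $\|\mathbf{a}_l\|^2$ times a constant (Cauchy--Schwarz for the rank-one piece, Frobenius Cauchy--Schwarz for the projector piece), obtain $\mathbf{a}_l^*\mathbf{V}_m^k\mathbf{a}_l\ge c\,\|\mathbf{a}_l\|^2$, and invert. Your handling of the degenerate case where the bracketed constant is nonpositive and your explicit trace argument for the projection term are slightly more careful than the paper's presentation (whose inequality (33) in fact has the direction reversed by a typo), but the substance is identical.
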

\begin{proof}
From (27), in stage $k$, the computation rate of relay $m$ is zero for all $\mathbf{a}_l$ satisfying
\begin{eqnarray}
{\rm{\mathbf{a}}}_l^*\mathbf{V}_m^k{\mathbf{a}_l} \ge 1
\end{eqnarray}
From (28), we can rewrite the left side of (31) as
\begin{eqnarray}
{\rm{\mathbf{a}}}_l^*\mathbf{V}_m^k{\mathbf{a}_l}= {\left| {\left| {{\mathbf{a}_l}} \right|} \right|^2} - \frac{{{{\left| {\mathbf{g}{{_m^k}^*}{\mathbf{a}_l}} \right|}^2}}}{{\frac{1}{{\text{SNR}_T}} + {{\left| {\left| {\mathbf{g}_m^k} \right|} \right|}^2}}} -\mathbf{a}_l^*{\mathbf{E}_k^*{\left( {{\mathbf{E}_k}\mathbf{E}_k^*} \right)^{ - 1}}{\mathbf{E}_k}}\mathbf{a}_l
\end{eqnarray}
Using Cauchy-Schwarz inequality, ${\left| {\mathbf{g}{{_m^k}^*}{\mathbf{a}_l}} \right|}^2 \le {{{\left| {\left| {{\mathbf{a}_l}} \right|} \right|}^2}{{\left| {\left| {\mathbf{g}_m^k} \right|} \right|}^2}}$ and $\mathbf{a}_l^*{\mathbf{E}_k^*{\left( {{\mathbf{E}_k}\mathbf{E}_k^*} \right)^{ - 1}}{\mathbf{E}_k}}\mathbf{a}_l \\\le {\left| {\left| {{\mathbf{a}_l}} \right|} \right|^2}{\left| {\left| {{\mathbf{E}_k^*{\left( {{\mathbf{E}_k}\mathbf{E}_k^*} \right)^{ - 1}}{\mathbf{E}_k}}} \right|} \right|}$, we have
\begin{eqnarray}
{\rm{\mathbf{a}}}_l^*\mathbf{V}_m^k{\mathbf{a}_l} &\le & {\left| {\left| {{\mathbf{a}_l}} \right|} \right|^2} - \frac{{{{\left| {\left| {{\mathbf{a}_l}} \right|} \right|}^2}{{\left| {\left| {\mathbf{g}_m^k} \right|} \right|}^2}}}{{\frac{1}{{\text{SNR}_T}} + {{\left| {\left| {\mathbf{g}_m^k} \right|} \right|}^2}}} - {\left| {\left| {{\mathbf{a}_l}} \right|} \right|^2}{\left| {\left| {{\mathbf{E}_k^*{\left( {{\mathbf{E}_k}\mathbf{E}_k^*} \right)^{ - 1}}{\mathbf{E}_k}}} \right|} \right|}\nonumber\\&=& {\left| {\left| {{\mathbf{a}_l}} \right|} \right|^2}\left( {\frac{{\frac{1}{{\text{SNR}_T}}}}{{\frac{1}{{\text{SNR}_T}} + {{\left| {\left| {\mathbf{g}_m^k} \right|} \right|}^2}}} - {{\left| {\left| {{\mathbf{E}_k^*{\left( {{\mathbf{E}_k}\mathbf{E}_k^*} \right)^{ - 1}}{\mathbf{E}_k}}} \right|} \right|}}} \right)
\end{eqnarray}
Hence, 
\begin{eqnarray}
{\left| {\left| {{\mathbf{a}_l}} \right|} \right|^2} \ge \frac{1}{{\frac{{\frac{1}{{\text{SNR}_T}}}}{{\frac{1}{{\text{SNR}_T}} + {{\left| {\left| {\mathbf{g}_m^k} \right|} \right|}^2}}} - \left| {\left| {\mathbf{E}_k^*{\left( {{\mathbf{E}_k}\mathbf{E}_k^*} \right)^{ - 1}}{\mathbf{E}_k}} \right|} \right|}} \Rightarrow {\rm{\mathbf{a}}}_l^*\mathbf{V}_m^k{\mathbf{a}_l} \ge 1
\end{eqnarray}
\end{proof}

The equation corresponding to this ECV is recovered by quantizing $\tilde y_m^k$ in (19) as
\begin{eqnarray}
s_m^k = Q\left( {\tilde y_m^k} \right)
\end{eqnarray}
$\beta _m^k$ and $\mathbf{c}_m^k$ in (19) are substituted from (25) and (26). The overall rate of recovering this equation at the destination is
\begin{eqnarray}
{\rm{R}}_m^k = {\rm{min}}\left( {r_m^k,{{\tilde r}_m}} \right)
\end{eqnarray}
where ${\tilde r_m}$ and $r_m^k$ are given in (8) and (27), respectively. Now, the relay with the highest rate $R_m^k$ sends its equation to the destination at the $k$-th time slot, by using the technique similar to the one presented in [15], as follows. The $m$-th relay sets a timer with the value $T_m$ proportional to the inverse of its rate $R_m^k$, which counts down to zero simultaneously. The relay with timer reaching zero first has the highest rate, broadcasts a flag to inform the other relays, and then transmits its equation in the $k$-th time slot. In the stage $k$, we denote $s_{\text{max}}^k= \mathbf{e}_k^*\mathbf{X}$ as the transmitted equation by the best relay $n_k$.

After $L$ stages, $L$ independent equations are recovered and sent to the destination, and based on them, all users' messages are decoded at the destination. Assume that for the equation $L$, relay $n_L$ is selected as the best relay. It can be easily observed that the achievable rate of the proposed scheme, ICMF, (for recovering all users' messages at the destination) is,
\begin{eqnarray}
{R_{\text{ICMF}}} = \frac{L}{{L + 1}}R_{{n_L}}^L
\end{eqnarray}
Here, $R_{{n_L}}^L$, which denotes the rate of relay $n_L$, is obtained from (36). Note that $L+1$ time slots are required to transmit $L$ complex equations, in contrast to the CMF that requires $M+1$ time slots. The ICMF prosedure is summerized in Table I.

In this algorithm, when the cooperation among the relays is not possible because of the poor quality of the inter-relays channels, only the best relay with the maximum computation rate is selected using the technique described above, and that relay transmits its $L$ best independent equations to the receiver. On the other hand, for sufficiently strong inter-relays channels such that the computation rates of all previously transmitted equations at the relays are not decreased, ICMF outperforms the centralized optimal CMF with global knowledge of all links states.

\begin{theorem}
In the case of sufficiently strong inter-relays channels, ICMF achieves higher overall rate than the centralized optimal CMF.
\end{theorem}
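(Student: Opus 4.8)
The plan is to compare the two achievable rates factor by factor. First I would isolate the quantity to beat: let $C^{\star}$ be the largest symmetric per-equation rate a global coordinator can guarantee, i.e. the largest value of $\min_{1\le j\le L}\min\{\log^{+}((\mathbf{b}_{j}^{*}\mathbf{H}_{\nu_{j}}\mathbf{b}_{j})^{-1}),\,\tilde r_{\nu_{j}}\}$ over all families of $L$ linearly independent integer ECVs $\mathbf{b}_{1},\dots,\mathbf{b}_{L}$ and all assignments of them to relays $\nu_{1},\dots,\nu_{L}$; by definition the centralized optimal CMF rate is $\frac{L}{M+1}C^{\star}$. Since the ICMF rate is $\frac{L}{L+1}R_{n_{L}}^{L}$ and $M\ge L$ in any network in which CMF can avoid a deterministic rank failure (the claim being immediate otherwise), the pre-factor already obeys $\frac{L}{L+1}\ge\frac{L}{M+1}$, so the whole claim reduces to proving $R_{n_{L}}^{L}\ge C^{\star}$, with strictness when $M=L$ to be extracted from the key lemma below.

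The first and central ingredient is that exploiting the previously transmitted equations can never lower a relay's computation rate: for every relay $m$, stage $k$, integer ECV $\mathbf{a}$, and admissible $\mathbf{E}_{k}$, one has $\mathbf{a}^{*}\mathbf{V}_{m}^{k}\mathbf{a}\le\mathbf{a}^{*}\mathbf{H}_{m}\mathbf{a}$. I would prove this from the variational descriptions $\mathbf{a}^{*}\mathbf{V}_{m}^{k}\mathbf{a}=\min_{\beta,\mathbf{c}}\big(\frac{|\beta|^{2}}{\text{SNR}_T}+\|\beta\mathbf{g}_{m}^{k}+\mathbf{E}_{k}^{*}\mathbf{c}-\mathbf{a}\|^{2}\big)$, obtained by minimizing the effective noise in (22) over $\beta_{m}^{k},\mathbf{c}_{m}^{k}$ and invoking (27), and $\mathbf{a}^{*}\mathbf{H}_{m}\mathbf{a}=\min_{\alpha}\big(\frac{|\alpha|^{2}}{\text{SNR}_T}+\|\alpha\mathbf{h}_{m}-\mathbf{a}\|^{2}\big)$, which follows from (5). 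Letting $\mathbf{P}_{k}=\mathbf{E}_{k}^{*}(\mathbf{E}_{k}\mathbf{E}_{k}^{*})^{-1}\mathbf{E}_{k}$ be the orthogonal projector onto the span of the $k-1$ transmitted ECVs, so that by (21) $\mathbf{g}_{m}^{k}=(\mathbf{I}-\mathbf{P}_{k})\mathbf{h}_{m}$, I would take $\beta$ equal to the $\alpha$ attaining the second minimum and pick $\mathbf{c}$ with $\mathbf{E}_{k}^{*}\mathbf{c}=\mathbf{P}_{k}\mathbf{a}$ (possible since $\mathbf{P}_{k}\mathbf{a}$ lies in the range of $\mathbf{P}_{k}$); then $\beta\mathbf{g}_{m}^{k}+\mathbf{E}_{k}^{*}\mathbf{c}-\mathbf{a}=(\mathbf{I}-\mathbf{P}_{k})(\alpha\mathbf{h}_{m}-\mathbf{a})$ has norm at most $\|\alpha\mathbf{h}_{m}-\mathbf{a}\|$, whence $\mathbf{a}^{*}\mathbf{V}_{m}^{k}\mathbf{a}\le\frac{|\alpha|^{2}}{\text{SNR}_T}+\|\alpha\mathbf{h}_{m}-\mathbf{a}\|^{2}=\mathbf{a}^{*}\mathbf{H}_{m}\mathbf{a}$, strictly unless $\alpha\mathbf{h}_{m}-\mathbf{a}$ is orthogonal to the span of the transmitted ECVs --- a null event under continuous fading. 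In particular the raw computation rate $\log^{+}((\mathbf{a}^{*}\mathbf{V}_{m}^{k}\mathbf{a})^{-1})$ dominates the plain-CMF rate $\log^{+}((\mathbf{a}^{*}\mathbf{H}_{m}\mathbf{a})^{-1})$ for every candidate equation, relay, and stage.

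Second, I would run an induction on the stage index $k$ that exploits the strong inter-relay hypothesis to decouple the scheme. When the inter-relay links are strong, every inter-relay point-to-point rate dominates the relevant computation rate, so the received-side-information rate satisfies $r_{m}^{e,k}=\min_{j<k}r_{n_{j}}^{j}$ for \emph{every} relay $m$, i.e. it is one common number at each stage. Assume inductively that $r_{n_{j}}^{j}\ge C^{\star}$ for all $j<k$ (vacuous at $k=1$). Because the $k-1$ already transmitted ECVs span a space of dimension $\le k-1<L$ while $\mathbf{b}_{1},\dots,\mathbf{b}_{L}$ are independent, at least one $\mathbf{b}_{j}$ is independent of them, so relay $\nu_{j}$ may propose $\mathbf{b}_{j}$ at stage $k$; by (27), (36), the monotonicity above, the definition of $C^{\star}$, and the induction hypothesis, the overall rate relay $\nu_{j}$ attains for this choice is $\min\{\log^{+}((\mathbf{b}_{j}^{*}\mathbf{V}_{\nu_{j}}^{k}\mathbf{b}_{j})^{-1}),\,\tilde r_{\nu_{j}},\,\min_{j'<k}r_{n_{j'}}^{j'}\}\ge C^{\star}$. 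Since ICMF picks, at each stage, the relay/equation pair of largest rate, $R_{n_{k}}^{k}\ge C^{\star}$, and $R_{n_{k}}^{k}=\min(r_{n_{k}}^{k},\tilde r_{n_{k}})$ forces $r_{n_{k}}^{k}\ge C^{\star}$, closing the induction; at $k=L$ this is $R_{n_{L}}^{L}\ge C^{\star}$. Therefore $R_{\text{ICMF}}=\frac{L}{L+1}R_{n_{L}}^{L}\ge\frac{L}{M+1}C^{\star}$, which is strict whenever $M>L$ and, for $M=L$, for almost every fading realization (since the side-information inequality is then strict).

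The step I expect to be the main obstacle is making this greedy induction watertight: one has to rule out that the myopic, stage-by-stage maximization of ICMF drives the final effective matrix $\mathbf{V}_{n_{L}}^{L}$ below what the coordinator needs. The way I would avoid having to track the actual greedy trajectory is precisely the observation above --- prove only the \emph{uniform} bound $R_{n_{k}}^{k}\ge C^{\star}$, which then propagates --- together with the reduction of $r_{m}^{e,k}$ to a relay-independent constant under the strong inter-relay assumption. The remaining work is careful bookkeeping of the three caps (raw computation rate, relay-to-destination rate, and received-equation rate) that enter (27) and (36), plus checking that every $\mathbf{b}_{j}$ has positive rate, so that it indeed lies inside the limited search region of Lemma~1.
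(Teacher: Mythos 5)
Your proposal is correct and follows essentially the same route as the paper's Appendix~II: the paper likewise reduces the claim to a per-equation rate comparison under the strong inter-relay assumption, uses the dimension-counting fact that at least one of the coordinator's $L$ independent ECVs must remain independent of the already-transmitted ones (phrased there as a contradiction at the last stage rather than your stage-by-stage induction), and invokes the fact that exploiting previously received equations cannot decrease a relay's computation rate. The only substantive difference is that the paper merely asserts this last monotonicity step (``it is clear that the performance of original ICMF is better than modified ICMF''), whereas your variational/projection argument establishing $\mathbf{a}^{*}\mathbf{V}_{m}^{k}\mathbf{a}\le\mathbf{a}^{*}\mathbf{H}_{m}\mathbf{a}$ actually proves it.
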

\begin{proof}
The proof is given in Appendix II.

\end{proof}

\begin{table}[t]
  \centering
  \caption{
    Algorithm 1: ICMF Procedure}
        \vspace*{-1em}
    \begin{tabular}{l}
      \hline
   For stage $k=1,...,L$
\\
\hspace*{+10pt} I. For relay $m=1,...,M$
\\
\hspace*{+20pt} 1. Recovering the equation transmitted in the time slot $k-1$
\\
\hspace*{+20pt} 2. Finding ECV $\mathbf{a}_m^k$ by solving (29)
\\
\hspace*{+20pt} 3. Recovering the equation with ECV $\mathbf{a}_m^k$ by quantizing (35) 
\\
\hspace*{+18pt}   end
\\
\hspace*{+10pt} II. Best relay selection by timer setting
\\
\hspace*{+10pt} III. Best relay transmission
\\
   end
   \vspace*{.5em}
   \\  
      \hline
    \end{tabular}

\end{table}
\section{Amplify-Forward-and-Compute (AFC)}
In the first time slot, like CMF, all $L$ users transmit their own codewords simultaneously to the relays. The $m$-th relay amplifies its received signal $y_m^r$ by a gain ${\gamma _m}$ and then forwards the amplified signal to the destination in its dedicated time slot. Hence, similar to CMF, AFC requires $M+1$ time slots to transmit the $L$ messages.
According to the power constraint $P_R$ at each relay $m$, ${\gamma _m}$ is computed as,
\begin{eqnarray}
{\gamma _m} = \sqrt {\frac{{\text{SNR}_R}}{{\text{SNR}_T{{\left| {\left| {{\mathbf{h}_m}} \right|} \right|}^2} + 1}}} 
\end{eqnarray}								
where $\text{SNR}_R = {P_R}/{N_0}$ and $\text{SNR}_T = {P_T}/{N_0}$. From (1) and (2), the received signal from each relay $m$, at the destination is given by,
\begin{eqnarray}
{y_m} = {f_m}{\gamma _m}y_m^r + {\eta _m} = {f_m}{\gamma _m}\mathop \sum \limits_{i = 1}^L {h_{im}}{x_i} + {f_m}{\gamma _m}{z_m} + {\eta _m} , m = 1, \ldots ,M
\end{eqnarray}												
By defining the following vectors,
\begin{eqnarray}
\mathbf{Y} &=& {\left[ {{y_1}, \ldots ,{y_M}} \right]^*},\mathbf{X} = {\left[ {{x_1}, \ldots ,{x_L}} \right]^*},\gamma = {\left[ {{\gamma _1}, \ldots ,{\gamma _M}} \right]^*},\mathbf{f} = {\left[ {{f_1}, \ldots ,{f_M}} \right]^*},\nonumber\\\mathbf{Z} &=& {\left[ {{\eta _1}, \ldots ,{\eta _m}} \right]^*},{\mathbf{Z}^r} = {\left[ {{z_1}, \ldots ,{z_M}} \right]^*}
\end{eqnarray}
And also,
\begin{eqnarray}
\mathbf{F} \buildrel \Delta \over = diag\left( \mathbf{f} \right) \times diag\left( \gamma \right)
\end{eqnarray}
The set of signals (39) from all relays can be rewritten in the following matrix form
\begin{eqnarray}
\mathbf{Y} = \mathbf{F}\mathbf{H}\mathbf{X} + \mathbf{F}{\mathbf{Z}^r} + \mathbf{Z}
\end{eqnarray}

Because of the similarity of (42) with a point-to-point MIMO channel, we utilize the integer-forcing linear receiver (IFLR) introduced by Zhan, et al [7], with slight modifications to include the effect of amplified noise. The receiver structure is shown in Fig. 2. Similar to [7], the receiver exploits a projection matrix ${\mathbf{B}_{L \times M}}$ to recover $L$ independent equations with the complex integer coefficient matrix  ${\mathbf{A}_{L \times L}}$. By taking the same steps as in [7], the optimum projection matrix $\mathbf{B}$ can be written as
\begin{eqnarray}
{\mathbf{B}_{\text{opt}}} = \mathbf{A}{\mathbf{H}^*}{\mathbf{F}^*}{\left( {\frac{1}{{\text{SNR}_T}}\left( {\mathbf{I} + \mathbf{F}{\mathbf{F}^*}} \right) + \mathbf{F}\mathbf{H}{\mathbf{H}^*}{\mathbf{F}^*}} \right)^{ - 1}}
\end{eqnarray}
and the optimum computation rate for recovering an equation with ECV $\mathbf{a}_l$ is obtained by
\begin{eqnarray}
{R_l} = {\rm{lo}}{{\rm{g}}^ + }\left( {{{(\mathbf{a}_l^*\mathbf{V}{\mathbf{a}_l})}^{ - 1}}} \right)
\end{eqnarray}
where,
\begin{eqnarray}
\mathbf{V} \buildrel \Delta \over = \mathbf{I} - {\mathbf{H}^*}{\mathbf{F}^*}{\left( {\frac{1}{{\text{SNR}_T}}\left( {\mathbf{I} + \mathbf{F}{\mathbf{F}^*}} \right) + \mathbf{F}\mathbf{H}{\mathbf{H}^*}{\mathbf{F}^*}} \right)^{ - 1}}
\end{eqnarray}

If $L$ independent ECVs $\mathbf{a}_1,...,\mathbf{a}_L$, forming the coefficient matrix $\mathbf{A}$, are used, the AFC rate for recovering all users' messages is given by,
\begin{eqnarray}
{{\rm{R}}_{\text{AFC}}} = \frac{L}{{M + 1}}{\rm{min}}\left( {{R_1}, \ldots ,{R_L}} \right)
\end{eqnarray}
where $R_l$ is the computation rate of $\mathbf{a}_l$ given by (44). Note that due to linear independency of ECVs $\mathbf{a}_1,...,\mathbf{a}_L$, the rank failure problem is solved.
To maximize the rate $R^{\text{AFC}}$, from (46) and (44), the optimum coefficient matrix $\mathbf{A}^{\text{opt}}$ is calculated as,
\begin{eqnarray}
{\mathbf{A}^{\text{opt}}} = arg\mathop {\max }\limits_{\mathbf{A} \in {{\left( {\mathbb{Z} + i \mathbb{Z}} \right)}^{L \times L}}} \mathop {\min }\limits_{l = 1, \ldots ,L} {\rm{log}}({\left( {\mathbf{a}_l^*\mathbf{V}{\mathbf{a}_l}} \right)^{ - 1}})\nonumber\\ = arg\mathop {\min }\limits_{\mathbf{A} \in {{\left( {\mathbb{Z} + i\mathbb{Z}} \right)}^{L \times L}}} \mathop {\max }\limits_{l = 1, \ldots ,L} \left( {\mathbf{a}_l^*\mathbf{V}{\mathbf{a}_l}} \right)\nonumber
\end{eqnarray}
subject to,
\begin{eqnarray}
\left\{ {\begin{array}{*{20}{c}}{\mathbf{A} = \left[ {\begin{array}{*{20}{c}}{\mathbf{a}_1^*}\\ \vdots \\{\mathbf{a}_L^*}\end{array}} \right]}\\{\det \left( \mathbf{A} \right) \ne 0}\\{{\mathbf{a}_l} \in {{\left( {\mathbb{Z} + i \mathbb{Z}} \right)}^L},l = 1, \ldots ,L}\end{array}} \right.
\end{eqnarray}	
Following the same method as in Lemma 1, we can limit the check space to
\begin{eqnarray}
{\left| {\left| {{\mathbf{a}_l}} \right|} \right|^2} \le \frac{1}{1-\left|\left|{\mathbf{H}^*}{\mathbf{F}^*}{\left( {\frac{1}{{\text{SNR}_T}}\left( {\mathbf{I} + \mathbf{F}{\mathbf{F}^*}} \right) + \mathbf{F}\mathbf{H}{\mathbf{H}^*}{\mathbf{F}^*}} \right)^{ - 1}}\right|\right|}
\end{eqnarray}				
Finally, by solving the set of the equations, the users' messages are recovered. 
\begin{figure}[tb!]
\centering

\includegraphics[width =5in]{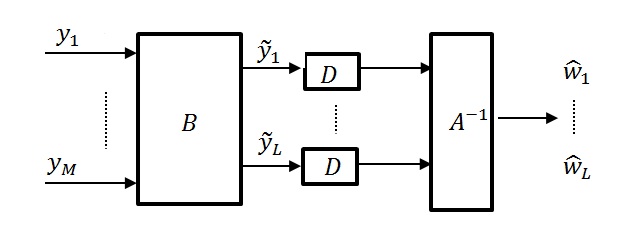}

\caption{Receiver structure at the destination for AFC method ($D$ indicates lattice decoder)}

\end{figure}
\section{Hybrid Compute-Amplify-and-Forward (HCAF)}
In ICMF, when the computation rate of the best relay in a stage $k$ is lower than the target rate $R_t$, the system encounters an outage event. In this case, recovering and sending an equation by the best relay cannot help the destination. As an alternative method, when in stage $k$, the highest computation rates of all relays are less than the target rate, the $L-k+1$ best relays, with the highest rates (though all less than $R_t$), utilize the AFC strategy to amplify and forward their received signals to the destination in the remaining $L-k+1$ time slots. These signals can then be exploited by the destination with the help of the  $k-1$ previously received equations to recover the remaining equations, which can induce higher rates compared to the ICMF. The best relays are selected by utilizing count-down timers described in Section III for the ICMF strategy.

Suppose that the relays $n_1,...,n_{k-1}$, named as computing relays, have recovered and transmitted equations ${\mathbf{d}_1^*}\mathbf{X} = {s_1^{\text{CF}}}$,$...$, $\mathbf{d}_{k - 1}^*\mathbf{X} = {s_{k - 1}^{\text{CF}}}$, or in the matrix form of 
${\mathbf{S}^{\text{CF}}} = \mathbf{D}\mathbf{X}$, where $\mathbf{D} \buildrel \Delta \over = \left[ {\begin{array}{*{20}{c}}{\mathbf{d}_1} \hdots {\mathbf{d}_{\left( {k - 1} \right)}}\end{array}} \right]^*$ and ${\mathbf{S}^{\text{CF}}} = \left[ {\begin{array}{*{20}{c}}{{{s_1^{\text{CF}}}}} \hdots {{{s_{\left( {k - 1} \right)}^{\text{CF}}}}}\end{array}} \right]^T$, in the first $k-1$ slots to the destination, and at the stage $k$, the highest rate is less than the target rate. Also, assume that at this stage, the relays $n_k,...,n_L$ are selected as amplifying relays, based on their computation rates (which are higher than the rates of the other relays). For amplifying relays, we define the following vectors 
\begin{eqnarray}
{\mathbf{Y}^{\text{AF}}} &=& {\left[ {{y_{{n_k}}}, \ldots ,{y_{{n_L}}}} \right]^*},{\bf{\gamma} ^{\text{AF}}} = {\left[ {{\gamma _{{n_k}}}, \ldots ,{\gamma _{{n_L}}}} \right]^*},{\mathbf{f}^{\text{AF}}} = {\left[ {{f_{{n_k}}}, \ldots ,{f_{{n_L}}}} \right]^*},\nonumber\\{\mathbf{Z}^{\text{AF}}} &=& {\left[ {{\eta _k}, \ldots ,{\eta _L}} \right]^*},{\mathbf{Z}^r}^{\text{AF}} = {\left[ {{z_{{n_k}}}, \ldots ,{z_{{n_L}}}} \right]^*}
\end{eqnarray}
and matrices,
\begin{eqnarray}
{\mathbf{H}^{\text{AF}}} = \left[ {\begin{array}{*{20}{c}}{\mathbf{h}_{{n_k}}^*}\\ \vdots \\{\mathbf{h}_{{n_L}}^*}\end{array}} \right]
\end{eqnarray}
\begin{eqnarray}
{\mathbf{F}^{\text{AF}}} \buildrel \Delta \over = diag\left( {{\mathbf{f}^{\text{AF}}}} \right) \times diag\left( {{\bf{\gamma} ^{\text{AF}}}} \right)
\end{eqnarray}                                                                                           
By the above definitions, the received signals from the amplifying relays at the destination can be simply written as
\begin{eqnarray}
{\mathbf{Y}^{\text{AF}}} = {\mathbf{F}^{\text{AF}}}{\mathbf{H}^{\text{AF}}}\mathbf{X}+ {\mathbf{F}^{\text{AF}}}{\mathbf{Z}^r}^{\text{AF}} + {\mathbf{Z}^{\text{AF}}}
\end{eqnarray}

The block diagram of the receiver at the destination is shown in Fig. 3. As shown in the figure, the effects of the received equations from the computing relays, i.e., relays $n_1,..., n_{k-1}$, are first removed from the signals received by the amplifying relays using projection space method,
\begin{eqnarray}
{\mathbf{\hat Y}^{\text{AF}}} = {\mathbf{Y}^{\text{AF}}} - {\mathbf{F}^{\text{AF}}}{\mathbf{H}^{\text{AF}}}{\mathbf{D}^*}{\left( {\mathbf{D}{\mathbf{D}^*}} \right)^{ - 1}}{\mathbf{S}^{\text{CF}}}
\end{eqnarray}
As mentioned previously for the ICMF, this makes the later derivations simpler (see proof of Theorem 3). The destination exploits two projection matrices ${\mathbf{B}_{\left( {L - k + 1} \right) \times \left( {L - k + 1} \right)}}$ and ${\mathbf{C}_{\left( {k - L + 1} \right) \times \left( {k - 1} \right)}}$ for the signals received from the amplifying relays and the equations received from the computing relays, respectively. After the projections, the results are added as
\begin{eqnarray}
{\mathbf{\tilde Y}^{\text{AF}}} = \mathbf{B}{\mathbf{\hat Y}^{\text{AF}}} + \mathbf{C}{\mathbf{S}^{\text{CF}}}
\end{eqnarray}
where $\mathbf{B} = \left[ {\begin{array}{*{20}{c}}{\mathbf{b}_1} \hdots {\mathbf{b}_{L - k + 1}}\end{array}} \right]^*$ and $\mathbf{C} \buildrel \Delta \over = \left[ {\begin{array}{*{20}{c}}{{\mathbf{c}_1}} \hdots {{\mathbf{c}}_{L - k + 1}}\end{array}} \right]^*$. From ${\mathbf{\tilde Y}^{\text{AF}}}$, the remaining $L-k+1$ linearly independent equations are recovered with the complex integer coefficient matrix $\mathbf{A}_{\left( {L - k + 1} \right) \times L}^{\text{AF}}$, ${\mathbf{A}^{\text{AF}}} = \left[ {\begin{array}{*{20}{c}}{\mathbf{a}_1} \hdots {\mathbf{a}_{L - k + 1}}\end{array}} \right]^*$, as follows. 

The $l$-th row of the vector ${\mathbf{\tilde Y}^{\text{AF}}}$ in (54) is given by 
\begin{eqnarray}
{\tilde y_l}^{\text{AF}} = \mathbf{b}_l^*{\mathbf{\hat Y}^{\text{AF}}} + \mathbf{c}_l^*{\mathbf{S}^{\text{CF}}} = \mathbf{b}_l^*\mathbf{G}\mathbf{X} + \mathbf{c}_l^*\mathbf{D}\mathbf{X} + \mathbf{b}_l^*{\mathbf{F}^{\text{AF}}}{\mathbf{Z}^r}^{\text{AF}} + \mathbf{b}_l^*{\mathbf{Z}^{\text{AF}}}
\end{eqnarray}
where $\mathbf{G}$ is defined as follows;
\begin{eqnarray}
\mathbf{G} \buildrel \Delta \over = {\mathbf{F}^{\text{AF}}}{\mathbf{H}^{\text{AF}}}\left( {\mathbf{I} - {\mathbf{D}^*}{{\left( {\mathbf{D}{\mathbf{D}^*}} \right)}^{ - 1}}\mathbf{D}} \right)
\end{eqnarray}
The equation $\mathbf{a}_l^*\mathbf{X}$ is recovered from ${\tilde y_l}$ as
\begin{eqnarray}
{\tilde y_l}^{\text{AF}} = \mathbf{a}_l^*\mathbf{X} + \left( {\mathbf{b}_l^*\mathbf{G} + \mathbf{c}_l^*\mathbf{D} - \mathbf{a}_l^*} \right)\mathbf{X} + \mathbf{b}_l^*{\mathbf{F}^{\text{AF}}}{\mathbf{Z}^r}^{\text{AF}} + \mathbf{b}_l^*{\mathbf{Z}^{\text{AF}}}
\end{eqnarray}
The effective noise variance in this computation is equal to
\begin{eqnarray}
{N_{\text{eq}}} ={\rm{\mathbb{E}}}\left\{ {{{\left| {{\tilde y}_l^{\text{AF}} - \mathbf{a}_l^*\mathbf{X}} \right|}^2}} \right\}= {\left| {\left| {{\mathbf{b}_l}} \right|} \right|^2} + {\left| {\left| {{\mathbf{F}^{\text{AF}}}^*{\mathbf{b}_l}} \right|} \right|^2} + \text{SNR}_T{\left| {\left| {{\mathbf{G}^*}{\mathbf{b}_l} + {\mathbf{D}^*}{\mathbf{c}_l} - {\mathbf{a}_l}} \right|} \right|^2}
\end{eqnarray}
Hence, from (57) and (58), the computation rate of the equation is given by
\begin{eqnarray}
{R_l} = {\rm{lo}}{{\rm{g}}^ + }\left( {\frac{{\text{SNR}_T}}{{{{\left| {\left| {{\mathbf{b}_l}} \right|} \right|}^2} + {{\left| {\left| {{\mathbf{F}^{\text{AF}}}^*{\mathbf{b}_l}} \right|} \right|}^2} + \text{SNR}_T{{\left| {\left| {{\mathbf{G}^*}{\mathbf{b}_l} + {\mathbf{D}^*}{\mathbf{c}_l} - {\mathbf{a}_l}} \right|} \right|}^2}}}} \right)
\end{eqnarray}
For a maximum computation rate, we should solve the following optimization problem:
\begin{eqnarray}
\mathop {\max }\limits_{{\mathbf{b}_l},{\mathbf{c}_l}} {\rm{lo}}{{\rm{g}}^ + }\left( {\frac{{\text{SNR}_T}}{{{{\left| {\left| {{\mathbf{b}_l}} \right|} \right|}^2} + {{\left| {\left| {{\mathbf{F}^{\text{AF}}}^*{\mathbf{b}_l}} \right|} \right|}^2} + \text{SNR}_T{{\left| {\left| {{\mathbf{G}^*}{\mathbf{b}_l} + {\mathbf{D}^*}{\mathbf{c}_l} - {\mathbf{a}_l}} \right|} \right|}^2}}}} \right)
\end{eqnarray}
The following theorem presents the solution of this optimization problem:
\begin{theorem}
The optimum values of vectors $\mathbf{b}_l$ and $\mathbf{c}_l$ for recovering an equation with coefficient vector $\mathbf{a}_l$ are given by
\begin{eqnarray}
\mathbf{b}_{\text{opt},l}^* = \mathbf{a}_l^*{\mathbf{G}^*}{\left( {\frac{1}{{\text{SNR}_T}}\left( {\mathbf{I} + {\mathbf{F}^{\text{AF}}}{\mathbf{F}^{\text{AF}}}^*} \right) + \mathbf{G}{\mathbf{G}^*}} \right)^{ - 1}}
\end{eqnarray}
and
\begin{eqnarray}
\mathbf{c}_{\text{opt},l}^* = \mathbf{a}_l^*{\mathbf{D}^*}{\left( {\mathbf{D}{\mathbf{D}^*}} \right)^{ - 1}}
\end{eqnarray}
Therefore, the matrices $\mathbf{B}$ and $\mathbf{C}$ can be written as
\begin{eqnarray}
\mathbf{B} = {\mathbf{A}^{\text{AF}}}{\mathbf{G}^*}{\left( {\frac{1}{{\text{SNR}_T}}\left( {\mathbf{I} + {\mathbf{F}^{\text{AF}}}{\mathbf{F}^{\text{AF}}}^*} \right) + \mathbf{G}{\mathbf{G}^*}} \right)^{ - 1}}
\end{eqnarray}
and
\begin{eqnarray}
\mathbf{C} = {\mathbf{A}^{\text{AF}}}{\mathbf{D}^*}{\left( {\mathbf{D}{\mathbf{D}^*}} \right)^{ - 1}}
\end{eqnarray}
\end{theorem}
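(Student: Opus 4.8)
The plan is to recognize (60) as an unconstrained convex program and solve it by differentiation, mirroring the argument used for Theorem 1 in Appendix I. Since $\log^{+}(\cdot)$ is monotonically increasing and $\text{SNR}_T$ is fixed, maximizing (60) over $(\mathbf{b}_l,\mathbf{c}_l)$ is equivalent to minimizing the effective noise variance $N_{\text{eq}}$ of (58), i.e. the real quadratic form $\|\mathbf{b}_l\|^2+\|(\mathbf{F}^{\text{AF}})^{*}\mathbf{b}_l\|^2+\text{SNR}_T\|\mathbf{G}^{*}\mathbf{b}_l+\mathbf{D}^{*}\mathbf{c}_l-\mathbf{a}_l\|^2$. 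Because the $\mathbf{d}_i$ are ECVs of linearly independent equations, $\mathbf{D}$ has full row rank and $\mathbf{D}\mathbf{D}^{*}$ is invertible; the form is jointly convex in $(\mathbf{b}_l,\mathbf{c}_l)$, so any stationary point found by setting the (Wirtinger) gradient to zero is the global minimizer.

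First I would optimize over $\mathbf{c}_l$ with $\mathbf{b}_l$ held fixed. Only the last term of $N_{\text{eq}}$ contains $\mathbf{c}_l$, and minimizing $\|\mathbf{D}^{*}\mathbf{c}_l-(\mathbf{a}_l-\mathbf{G}^{*}\mathbf{b}_l)\|^2$ is a linear least-squares projection onto $\operatorname{range}(\mathbf{D}^{*})$, with solution $\mathbf{c}_l=(\mathbf{D}\mathbf{D}^{*})^{-1}\mathbf{D}(\mathbf{a}_l-\mathbf{G}^{*}\mathbf{b}_l)$. The crucial simplification, which decouples $\mathbf{c}_l$ from $\mathbf{b}_l$, is that the projector built into the definition (56) of $\mathbf{G}$ forces $\mathbf{G}\mathbf{D}^{*}=\mathbf{0}$, hence $\mathbf{D}\mathbf{G}^{*}=\mathbf{0}$, so the $\mathbf{G}^{*}\mathbf{b}_l$ term drops out and $\mathbf{c}_{\text{opt},l}=(\mathbf{D}\mathbf{D}^{*})^{-1}\mathbf{D}\mathbf{a}_l$, i.e. (62). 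Substituting this back and again using $\mathbf{G}(\mathbf{I}-\mathbf{D}^{*}(\mathbf{D}\mathbf{D}^{*})^{-1}\mathbf{D})=\mathbf{G}$ (from $\mathbf{G}\mathbf{D}^{*}=\mathbf{0}$), the last term of $N_{\text{eq}}$ collapses to $\|\mathbf{G}^{*}\mathbf{b}_l-(\mathbf{I}-\mathbf{D}^{*}(\mathbf{D}\mathbf{D}^{*})^{-1}\mathbf{D})\mathbf{a}_l\|^2$, so the objective to be minimized over $\mathbf{b}_l$ becomes $\mathbf{b}_l^{*}\bigl(\mathbf{I}+\mathbf{F}^{\text{AF}}(\mathbf{F}^{\text{AF}})^{*}+\text{SNR}_T\,\mathbf{G}\mathbf{G}^{*}\bigr)\mathbf{b}_l-2\,\text{SNR}_T\,\Re\{\mathbf{b}_l^{*}\mathbf{G}\mathbf{a}_l\}$ plus a term independent of $\mathbf{b}_l$.

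Setting the Wirtinger derivative with respect to $\mathbf{b}_l^{*}$ to zero gives $\bigl(\mathbf{I}+\mathbf{F}^{\text{AF}}(\mathbf{F}^{\text{AF}})^{*}+\text{SNR}_T\,\mathbf{G}\mathbf{G}^{*}\bigr)\mathbf{b}_l=\text{SNR}_T\,\mathbf{G}\mathbf{a}_l$; the matrix on the left is Hermitian positive definite (the identity term guarantees invertibility), so dividing by $\text{SNR}_T$ and inverting yields $\mathbf{b}_{\text{opt},l}=\bigl(\tfrac{1}{\text{SNR}_T}(\mathbf{I}+\mathbf{F}^{\text{AF}}(\mathbf{F}^{\text{AF}})^{*})+\mathbf{G}\mathbf{G}^{*}\bigr)^{-1}\mathbf{G}\mathbf{a}_l$, which is (61) after taking the conjugate transpose (the inverted matrix being Hermitian). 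Stacking the rows $\mathbf{b}_{\text{opt},l}^{*}$ and $\mathbf{c}_{\text{opt},l}^{*}$, $l=1,\dots,L-k+1$, into $\mathbf{B}$ and $\mathbf{C}$ with $\mathbf{A}^{\text{AF}}$ assembled accordingly then gives (63) and (64). The only genuinely nonroutine ingredient — and the step I would be most careful about — is the repeated use of $\mathbf{G}\mathbf{D}^{*}=\mathbf{0}$ coming from (56); without it the two minimizations stay coupled and the closed forms would not take the stated shape. Everything else is the same bias-removal / MMSE-projection calculation already carried out for Theorem 1.
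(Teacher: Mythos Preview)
Your proposal is correct and follows essentially the same approach as the paper: minimize the effective noise variance in (58), use the orthogonality $\mathbf{G}\mathbf{D}^{*}=\mathbf{0}$ implied by (56) to decouple the optimization in $\mathbf{b}_l$ and $\mathbf{c}_l$, and then solve each by setting the gradient to zero. The only cosmetic difference is that the paper expands $f(\mathbf{b}_l,\mathbf{c}_l)$ first and differentiates in both variables simultaneously, whereas you do a sequential (partial) minimization; the key ingredient and the resulting closed forms are identical.
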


\begin{proof}
The proof is given in Appendix III.
\end{proof}

By substituting (61) and (62) in (59), the computation rate can be written as
\begin{eqnarray}
{R_l} = {\rm{min}}\left( {{\rm{log}}\left( {{{\left( {\mathbf{a}_l^*\mathbf{U}{\mathbf{a}_l}} \right)}^{ - 1}}} \right),R_{{n_{k - 1}}}^{k - 1}} \right)
\end{eqnarray} 
where,
\begin{eqnarray}
\mathbf{U} \buildrel \Delta \over = \mathbf{I} - {\mathbf{G}^*}{\left( {\frac{1}{{\text{SNR}_T}}\left( {\mathbf{I} + {\mathbf{F}^{\text{AF}}}{\mathbf{F}^{\text{AF}}}^*} \right) + \mathbf{G}{\mathbf{G}^*}} \right)^{ - 1}}\mathbf{G} - {\mathbf{D}^*}{\left( {\mathbf{D}{\mathbf{D}^*}} \right)^{ - 1}}\mathbf{D}
\end{eqnarray}
and $R_{{n_{k - 1}}}^{k - 1}$ denotes the computation rate of relay $n_{k-1}$ as the best relay at stage $k-1$.
It is clear that the rate of the recovered remaining equations in (65) is lower than the rate of the computing relays. Hence, the rate of this strategy can be written as
\begin{eqnarray}
{R_{\text{HCAF}}} = \frac{L}{{L + 1}}{\rm{min}}\left( {{R_1}, \ldots ,{R_{\left( {L - k+1} \right)}}} \right)
\end{eqnarray}
Note that, the same as in the ICMF, the HCAF needs $L+1$ time slots to transmit $L$ messages. To maximize the computation rate  (67), from (65), $L-k+1$ linearly independent equations, which should also be independent from the computing equations, should be found from the following optimization problem,
\begin{eqnarray}
\mathbf{A}^{\text{opt,AF}} =arg\mathop {\min }\limits_{{\mathbf{A}^{\text{AF}}} \in {( {\mathbb{Z} + i \mathbb{Z}})^{\left( {L - k+1} \right) \times L}}} \mathop {\max }\limits_{l = 1, \ldots ,L-k+1} \left( {\mathbf{a}_l^*\mathbf{U}{\mathbf{a}_l}} \right)\nonumber
\end{eqnarray}
subject to
\begin{eqnarray}
\left\{ {\begin{array}{*{20}{c}}{{\mathbf{A}^{\text{AF}}} = \left[ {\begin{array}{*{20}{c}}{\mathbf{a}_1^*}\\ \vdots \\{\mathbf{a}_{L - k + 1}^*}\end{array}} \right]}\\{\det \left( {\left[ {{\mathbf{A}^{\text{AF}}};\mathbf{D}} \right]} \right) \ne 0}\\{{\mathbf{a}_l} \in {{\left( {\mathbb{Z} + i \mathbb{Z}} \right)}^L}, l = 1, \ldots ,L - k + 1}\end{array}} \right.
\end{eqnarray}
Following the same method as in Lemma 1, we can limit the check space to
\begin{eqnarray}
{\left| {\left| {{\mathbf{a}_l}} \right|} \right|^2} \le \frac{1}{1-\left|\left|{\mathbf{G}^*}{\left( {\frac{1}{{\text{SNR}_T}}\left( {\mathbf{I} + {\mathbf{F}^{\text{AF}}}{\mathbf{F}^{\text{AF}}}^*} \right) + \mathbf{G}{\mathbf{G}^*}} \right)^{ - 1}}\mathbf{G}\right|\right| - \left|\left|{\mathbf{D}^*}{\left( {\mathbf{D}{\mathbf{D}^*}} \right)^{ - 1}}\mathbf{D}\right|\right|}
\end{eqnarray}

The coefficient matrix corresponding to all equations at the destination can be written as
\begin{eqnarray}
{\mathbf{A}^{\text{opt}}} = \left[ {\begin{array}{*{20}{c}}{{\mathbf{A}^{\text{opt,AF}}}}\\\mathbf{D}\end{array}} \right]
\end{eqnarray}
The projection matrices $\mathbf{B}$ and $\mathbf{C}$ are calculated by substituting the matrices ${\mathbf{A}^{\text{opt,AF}}}$ and $\mathbf{D}$ in (63) and (64). The remaining equations, i.e. the rows of $\mathbf{S}^{\text{AF}}$, are recovered by quantizing ${\mathbf{\tilde Y}^{\text{AF}}}$ as
\begin{eqnarray}
{\mathbf{S}^{\text{AF}}} = Q\left( {{{\mathbf{\tilde Y}}^{\text{AF}}}} \right) = {\mathbf{A}^{\text{AF,opt}}}\mathbf{X}
\end{eqnarray}
where,
\begin{eqnarray}
{\mathbf{\tilde Y}^{\text{AF}}} = \mathbf{B}{\mathbf{\hat Y}^{\text{AF}}} + \mathbf{C}{\mathbf{S}^{\text{CF}}}
\end{eqnarray}
Finally, by solving the $L$ independent equations, obtained from the $k-1$ computing relays and the $L-k+1$ amplifying relays, the destination can recover all of the users' messages.

\begin{figure}[tb!]
\centering

\includegraphics[width =6in]{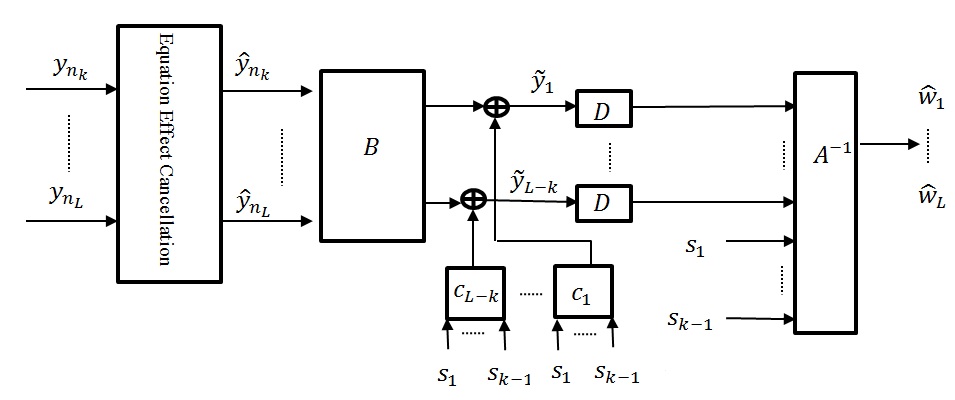}
\caption{Receiver structure at the destination for HCAF method ($D$ shows lattice decoder)}

\end{figure}

\begin{figure}[t]
\centering
\center
\includegraphics[width =6in]{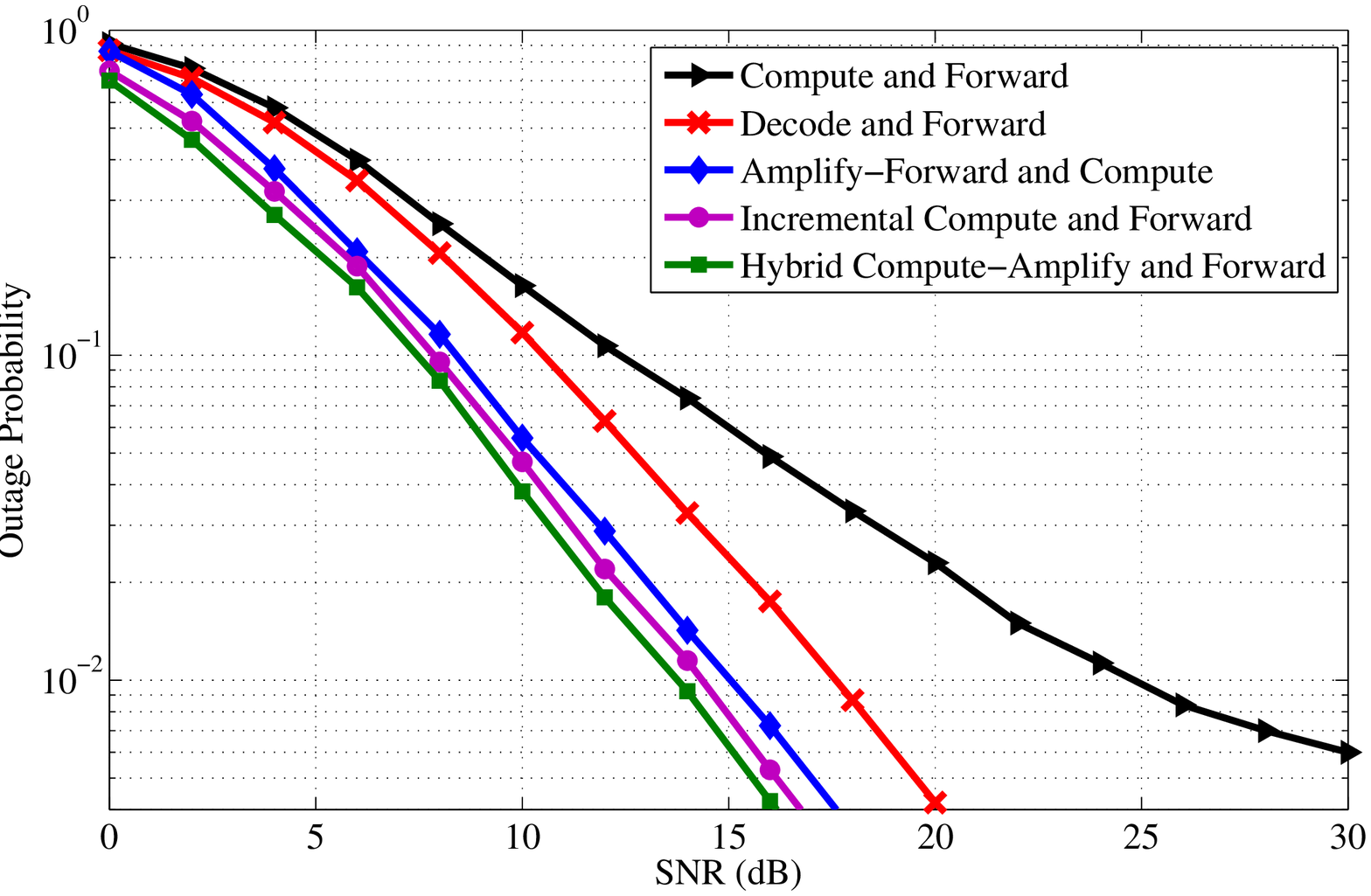}
\caption{Outage probability versus SNR for L=2 and M=3, ${R_t} = 1,\sigma _h^2 = 1,\sigma _f^2 = 10,\sigma _g^2 = 1$.}
\end{figure}


\begin{figure}[t]
\centering
\center
\includegraphics[width =6in]{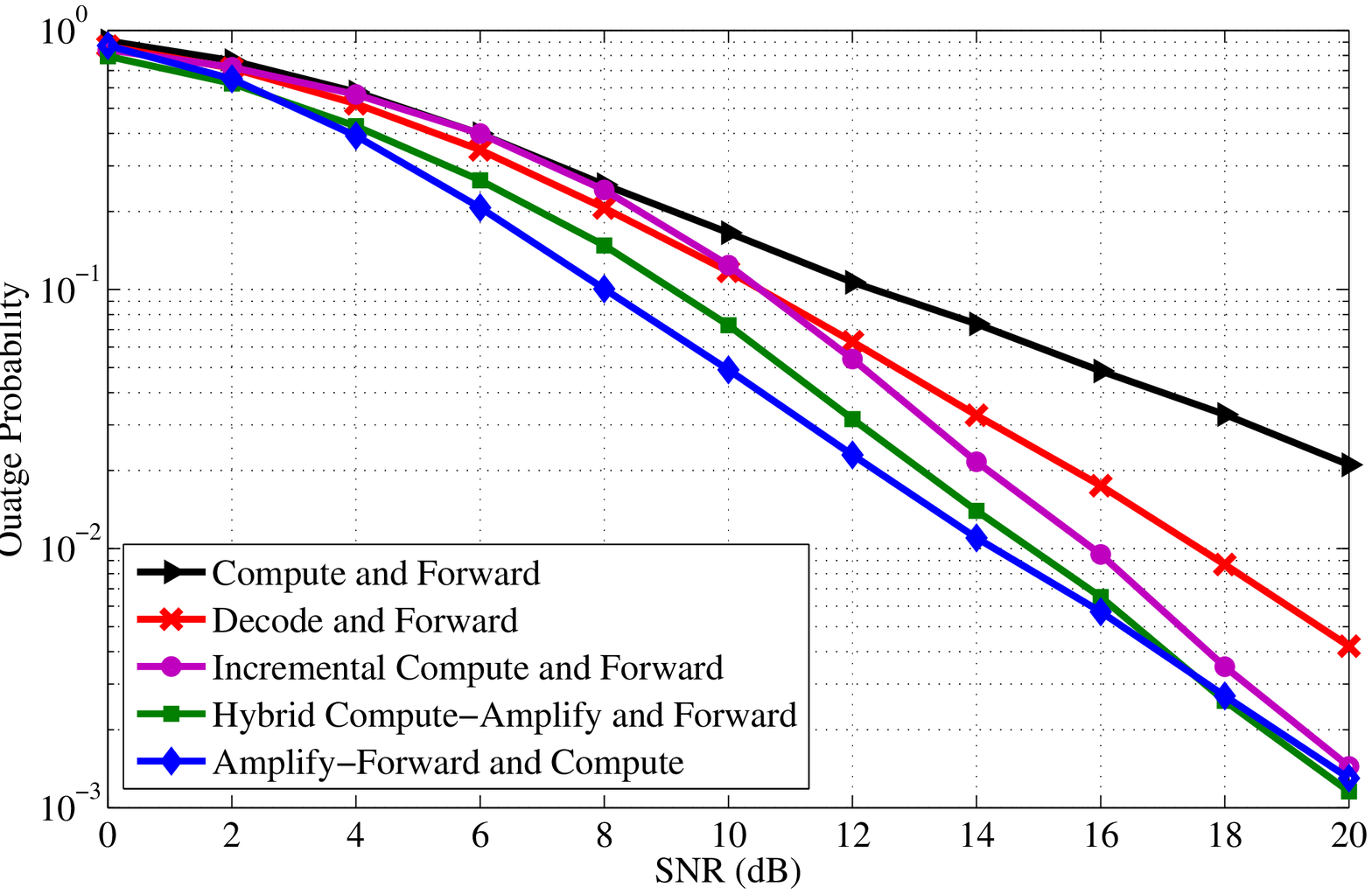}
\caption{Outage probability versus SNR for L=2 and M=3, ${R_t} = 1,\sigma _h^2 = 1,\sigma _f^2 = 10,\sigma _g^2 = 0.1$.}

\end{figure}


\begin{figure}[t]
\centering
\center
\includegraphics[width =6in]{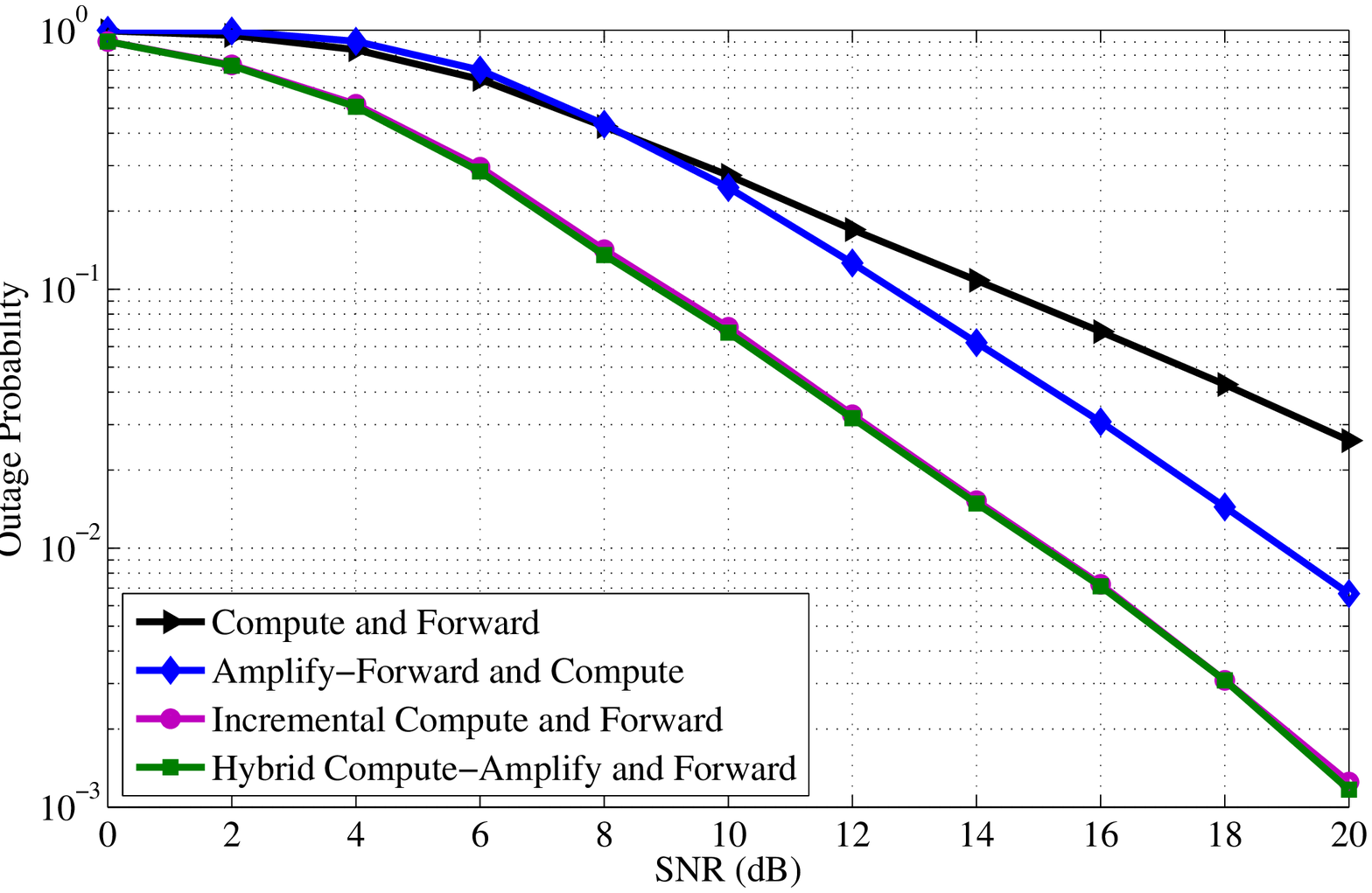}
\caption{Outage probability versus SNR for L=2 and M=3, ${R_t} = 1,\sigma _h^2 = 1,\sigma _f^2 = 1,\sigma _g^2 = 1$.}

\end{figure}
\begin{figure}[t]
\centering
\center
\includegraphics[width =6in]{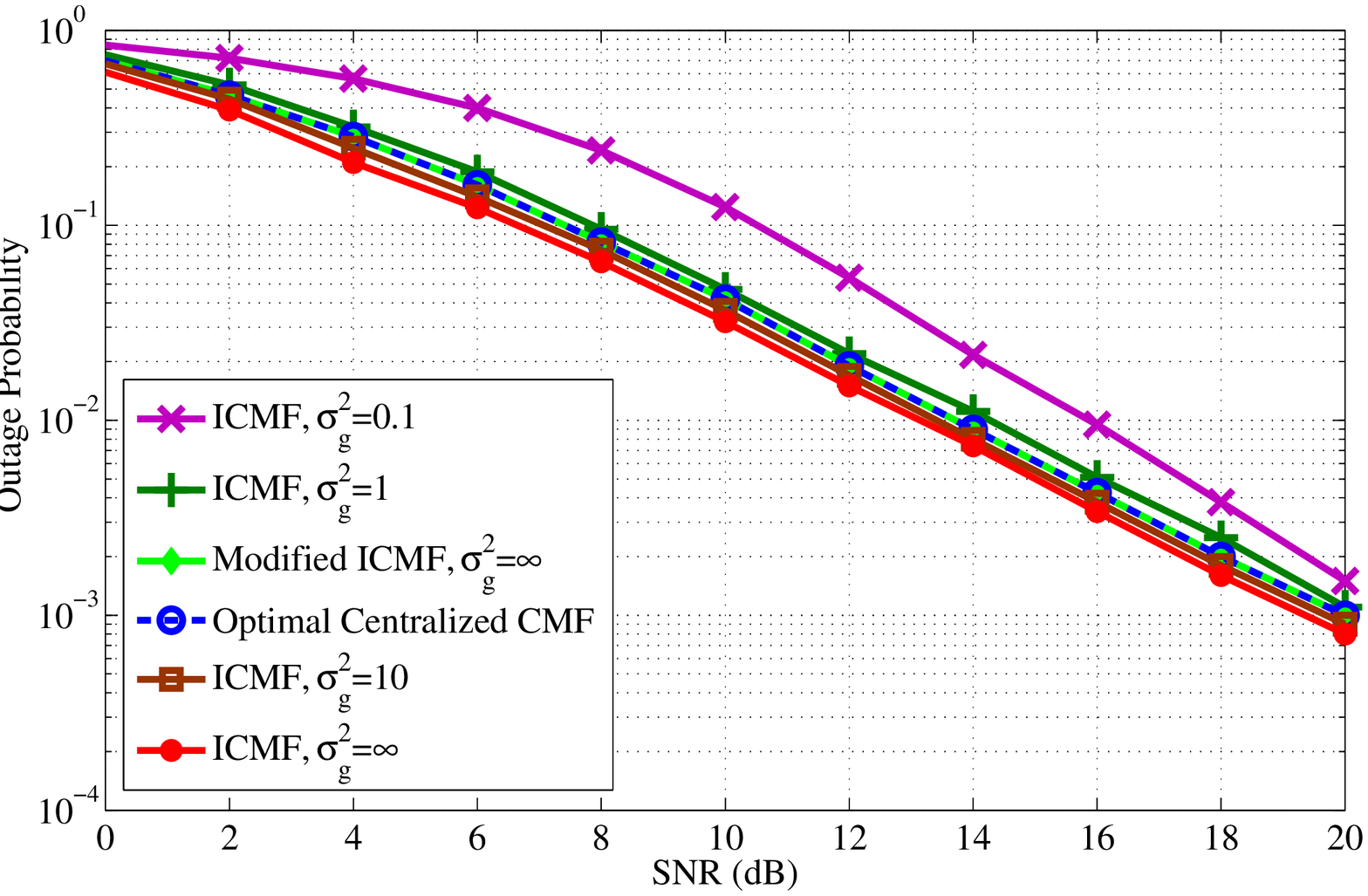}
\caption{Outage probability versus SNR for L=2 and M=3, ${R_t} = 1,\sigma _h^2 = 1,\sigma _f^2 = 10$.}
\end{figure}


\begin{figure}[t]
\centering
\center
\includegraphics[width =6in]{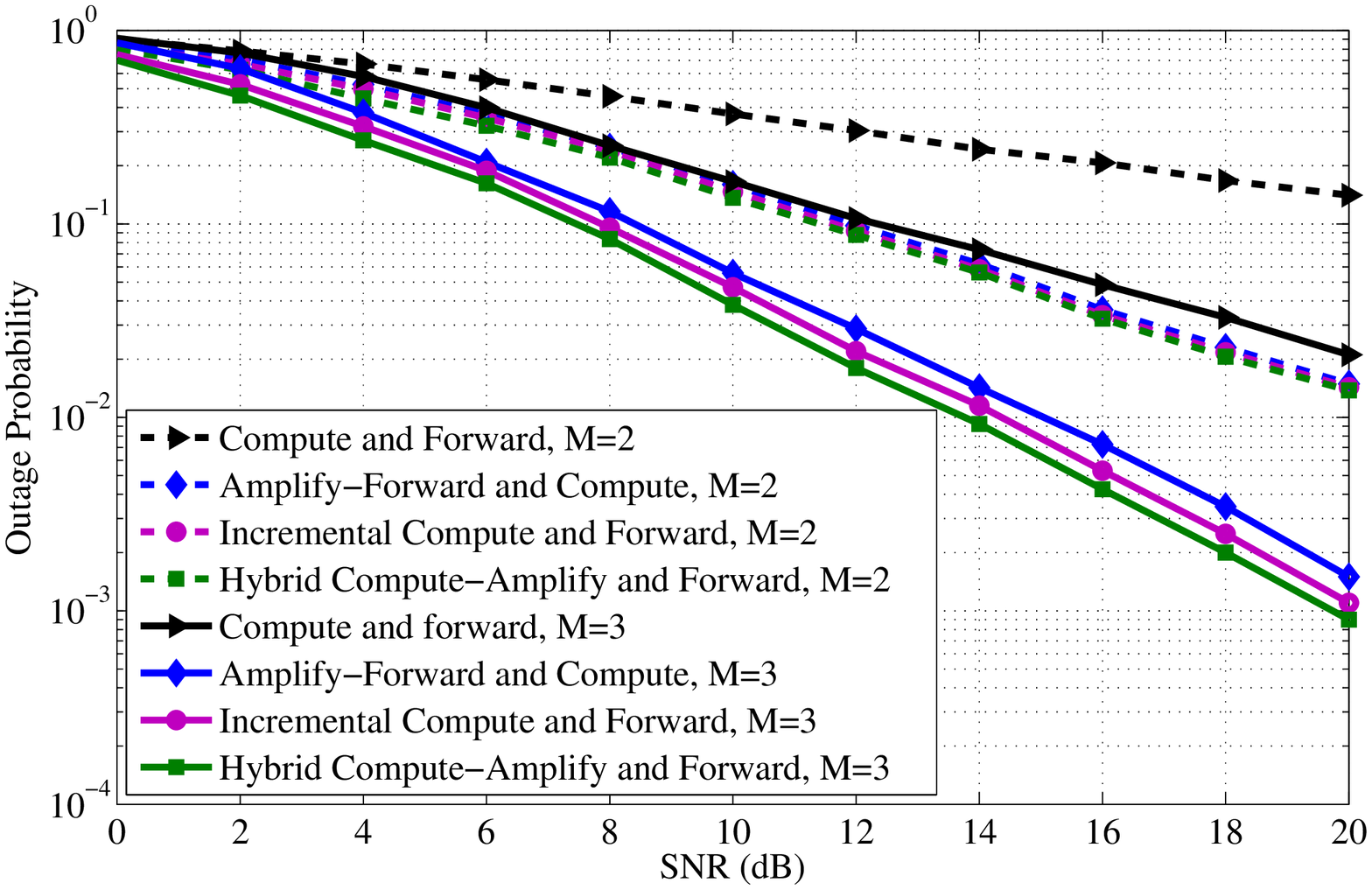}
\caption{Outage probability versus SNR for L=2, and M=2 and 3, ${R_t} = 1,\sigma _h^2 = 1,\sigma _f^2 = 10,\sigma _g^2 = 1$.}

\end{figure}
\begin{figure}[t]
\centering
\center
\includegraphics[width =6in]{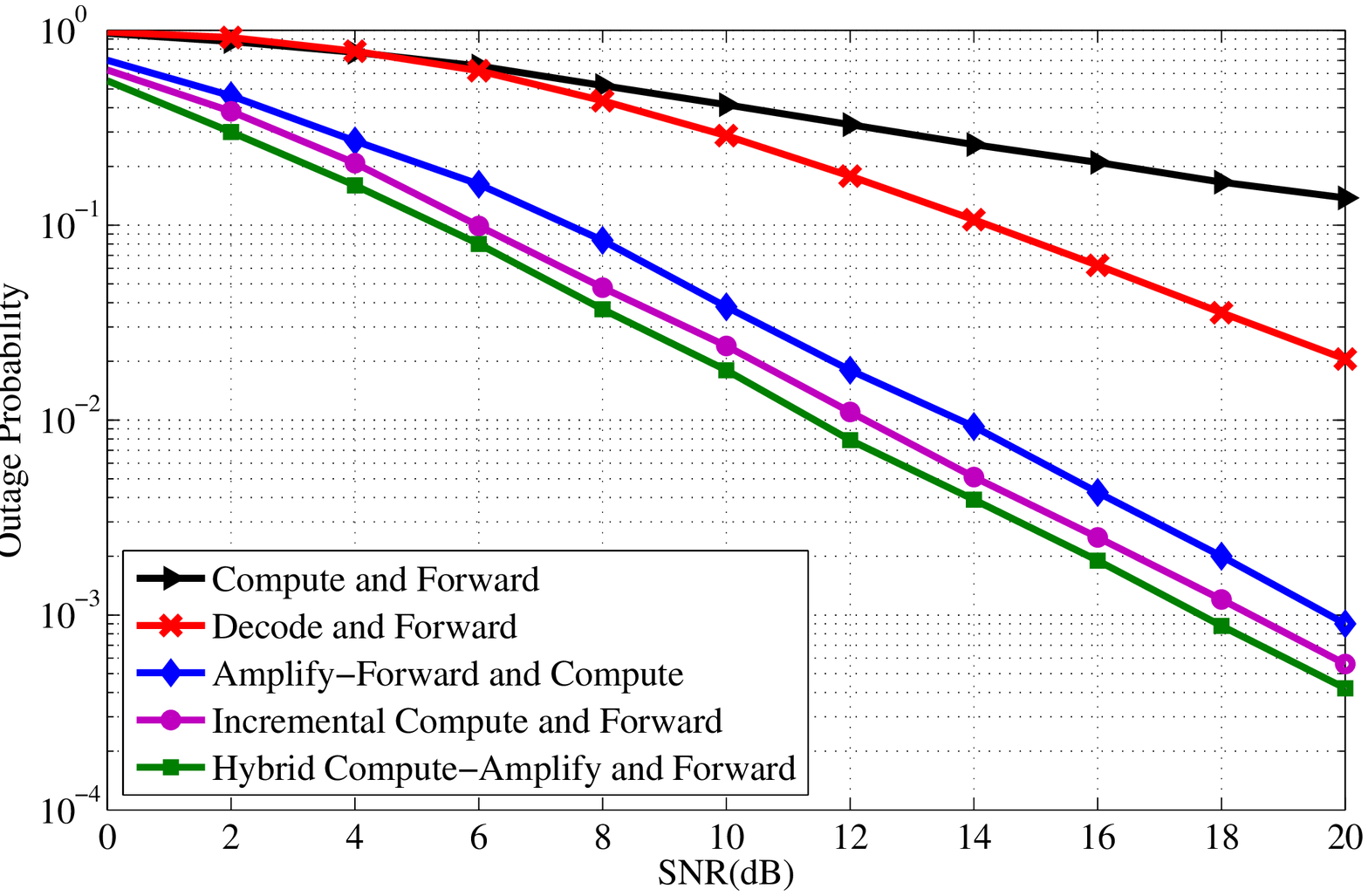}
\caption{Outage probability versus SNR for L=3 and M=3, ${R_t} = 1,\sigma _h^2 = 1,\sigma _f^2 = 10,\sigma _g^2 = 1$.}

\end{figure}

\newpage

\section{Simulation Results}
In this section, we evaluate and compare the performance of our proposed methods through computer simulations. We assume the case in which all of the nodes, i.e., the users and the relays, have equal transmission powers, and $\sigma _{im}^2 = \sigma _h^2 , \forall i,m$, $\sigma _{r,ab}^2 = \sigma _g^2 , \forall a,b$, and $\sigma _m^2 = \sigma _f^2 , \forall m$. However, the same qualitative conclusions as in the presented figures hold for the heterogenous setups as well. Threshold rate is set equal to one ($R_t=1$). 

Figures 4 and 5 show the outage probability, i.e. $\text{Pr}(R_{\text{scheme}}\le R_t))$, versus SNR for three proposed schemes along with those of the conventional CMF and DF relaying schemes for $L=2$, $M=3$, $\sigma _h^2 = 1$, $\sigma _f^2 = 10$, and $\sigma _g^2 = 1$ and $0.1$, respectively. For the DF strategy, the best relay with maximum rate jointly decodes the users' messages utilizing the successive interference cancellation method ([4] and [20]), and then transmits them separately to the destination. From this figures, the ICMF and AFC methods perform significantly better than the CMF and DF methods, especially at high SNRs. For example, for $\sigma _g^2 = 1$ and at outage probability of $10^{-2}$, the proposed ICMF and AFC schemes perform approximately $10 \text{dB}$ and $3 \text{dB}$ better than the CMF and the DF strategies, respectively. Moreover, both the ICMF and AFC methods achieve significantly higher diversity order than CMF in which due to the rank failure problem at the destination, the diversity order is low. As realized from the figures, the HCAF always shows better performance than ICMF; the amount of the improvement decreases with the inter-relay channel qualities, i.e., higher $\sigma _g^2$. For example, at outage probability of $10^{-2}$, for $\sigma _g^2 = 0.1$ and $1$, HCAF outperforms ICMF approximately $1.5 \text{dB}$ and $0.5 \text{dB}$, respectively. Furthermore, ICMF can perform better than the AFC for $\sigma _g^2$ higher than a certain threshold, due to the fact that the ICMF requires each relay to correctly decode the other relays transmissions in order to utilizes the previously transmitted equations. For example at outage $10^{-2}$, while at $\sigma _g^2 = 0.1$, ICMF performs approximately $2 \text{dB}$ worse than the AFC, at $\sigma _g^2 = 1$ it performs $1 \text{dB}$ better. Note although the DF outperforms CMF in these figures, the CMF can have a better performance than DF under very simple scenarios such as two way relay channels, where the probability of rank failure is low [8]. 

In Fig. 6, we consider the case with $L=2$, $M=3$, $\sigma _h^2 = 1$, $\sigma _f^2 = 1$, and $\sigma _g^2 = 1$. By comparison of Figs. 5 and 6, it can be realized that the performance of the proposed schemes is better when the channels from the relays to the destination experience higher SNR, i.e., higher $\sigma _f^2$. As can be observed and expected, the effect of $\sigma _f^2$ on the performance of AFC is more substantial than the other schemes, and the amount of the improvement of ICMF over AFC decreases for high $\sigma _f^2$. For example, at $\sigma _f^2 = 1$ and $10$, and at outage $10^{-2}$, ICMF performs approximately $4 \text{dB}$ and $1 \text{dB}$ better than AFC, respectively.

In Fig. 7, the effect of the inter-relays channels' qualities, i.e. $\sigma _g^2 $, on the performance is considered. In this figure, we have $L=2$, $M=3$, $\sigma _h^2 = 1$, $\sigma _f^2 = 10$. As we can see, ICMF with $\sigma _g^2$ larger than $10$ outperforms the centralized optimal CMF with global knowledge and the modified ICMF (please see Appendix 2 for introduction) performs similar to the optimal CMF. In addition, although the perfromance is degraded by the decrease of $\sigma _g^2 $, the optimal but impractical approach performs only about $2 \text{dB}$ better than the ICMF at very poor inter-relay links, i.e., $\sigma _g^2=0.1 $.

In Fig. 8, the effect of the number of relays on the performance has been studied and compared. The values of the parameters are: $L=2$, $\sigma _h^2 = 1$, $\sigma _f^2 = 10$, $\sigma _g^2 = 1$, and $M=2$ and $3$. By the increase of the number of relays, the performance and also the diversity order are significantly improved. For example at outage $2 \times 10^{-2}$ and the parameter setting of the figure, the proposed schemes with $M=3$ lead to approximately $5.5  \text{dB}$ better than the ones with $M=2$.

In Fig. 9, we consider three users and three relays, i.e. $L=3$ and $M=3$, and we set $\sigma _h^2 = 1$, $\sigma _f^2 = 10$, and $\sigma _g^2 = 1$. At outage probability of $10^{-2}$, it can be observed that the proposed schemes have approximately $8 \text{dB}$ better perfromance than the DF method, and provide significant improvment in comparison with the CMF scheme. As observed from Figs. 4 and 8, the performance gain of the proposed schemes, compared to the state-of-the-art approaches, increases with the number of users.

\section{Conclusion}
In this paper, we considered different relaying strategies for multi-user multi-relay networks, named as ICMF, AFC, and HCAF. In these strategies, new ideas are exploited to overcome the drawbacks of the conventional CMF strategy and to provide efficient and reliable transmission frameworks for multiuser cooperative networks. In ICMF, each relay exploits the previously transmitted equations, in a distributed and cooperative manner, to extract a new independent equation with highest computation rates. In AFC, the relays amplify and forward their received signals and the destination, as a center of computation, recovers all required equations. In HCAF, a combination of  ICMF  and AFC approaches are used in which whenever the highest computation rate of the relays is lower than the target rate, the relays switch from computing nodes to amplifying nodes. 
Numerical results indicate that the outage performance and diversity order of the proposed strategies are considerably better than those of the conventional CMF and DF strategies specially at high number of users or relays. Moreover, numerical results show that ICMF performs better than AFC only when the links among the relays experience high quality. It is notable that the complexity of AFC is much lower than that of the ICMF. HCAF strategy outperforms the ICMF, at the cost of more complicated receiver. Finally, the ICMF and HCAF schemes, independent of the number of relays ($M$), need $L+1$ time slots to transmit the $L$ users' messages, in contrast to AFC and CMF that require $M+1$ time slots. 

\appendices
\section{Proof of Theorem 1}
From (24), the optimum coefficient vectors are obtained by minimizing the following function: 
\begin{eqnarray}
f( {\beta _m^k,\mathbf{c}_m^k}) &=& \frac{1}{{\text{SNR}_T}}{\left| {\beta _m^k} \right|^2} + {\left| {\left| {\beta _m^k\mathbf{g}_m^k + \mathbf{E}_k^*\mathbf{c}_m^k - {\mathbf{a}_l}} \right|} \right|^2}\nonumber\\&=& \frac{1}{{\text{SNR}_T}}\beta _m^k\beta {{_m^k}^*} + {\left( {\beta _m^k\mathbf{g}_m^k + \mathbf{E}_k^*\mathbf{c}_m^k - {\mathbf{a}_l}} \right)^*}\left( {\beta _m^k\mathbf{g}_m^k + \mathbf{E}_k^*\mathbf{c}_m^k - {\mathbf{a}_l}} \right)\nonumber\\ &=& \frac{1}{{\text{SNR}_T}}\beta _m^k\beta {{_m^k}^*} + \beta _m^k\beta {{_m^k}^*}{\left| {\left| {\mathbf{g}_m^k} \right|} \right|^2} + 2\beta {{_m^k}^*}\mathbf{g}{{_m^k}^*}\mathbf{E}_k^*\mathbf{c}_m^k - 2\beta {{_m^k}^*}\mathbf{g}{{_m^k}^*}{\mathbf{a}_l}\nonumber\\ &+& \mathbf{c}{{_m^k}^*}{\mathbf{E}_k}\mathbf{E}_k^*\mathbf{c}_m^k - 2\mathbf{c}{{_m^k}^*}{\mathbf{E}_k}{\mathbf{a}_l} + \mathbf{a}_l^*{\mathbf{a}_l}
\end{eqnarray}
From the definition of $\mathbf{g}_m^k$ in (21), we have,
\begin{eqnarray}
\mathbf{g}{{_m^k}^*}\mathbf{E}_k^*\mathbf{c}_m^k = 0
\end{eqnarray}
Hence, we can write
\begin{eqnarray}
f = \beta _m^k\beta {{_m^k}^*}\left( {\frac{1}{{\text{SNR}_T}}+ {{\left| {\left| {\mathbf{g}_m^k} \right|} \right|}^2}} \right) - 2\beta {{_m^k}^*}\mathbf{g}{{_m^k}^*}{\mathbf{a}_l} + \mathbf{c}{{_m^k}^*}{\mathbf{E}_k}\mathbf{E}_k^*\mathbf{c}_m^k - 2\mathbf{c}{{_m^k}^*}{\mathbf{E}_k}{\mathbf{a}_l} + \mathbf{a}_l^*{\mathbf{a}_l}
\end{eqnarray}
The optimum value for $\beta _m^k$ is obtained by setting the derivative of $f$ with respect to $\beta _m^k$ equal to zero
\begin{eqnarray}
\frac{{\partial f\left( {\beta _m^k,\mathbf{c}_m^k} \right)}}{{\partial \beta _m^k}} = 2\beta _m^k\left( {\frac{1}{{\text{SNR}_T}} + {{\left| {\left| {\mathbf{g}_m^k} \right|} \right|}^2}} \right) - 2\mathbf{g}{{_m^k}^*}{\mathbf{a}_l} = 0
\end{eqnarray}
which leads to:
\begin{eqnarray}
\beta _{m,\text{opt}}^k = \frac{\mathbf{g}{{_m^k}^*}{\mathbf{a}_l}}{\frac{1}{\text{SNR}_T} + || {\mathbf{g}_m^k}||^2}
\end{eqnarray}
In a similar way, to obtain the optimum value for $\mathbf{c}_m^k$, we set:
\begin{eqnarray}
\frac{{\partial f\left( {\beta _m^k,\mathbf{c}_m^k} \right)}}{{\partial \mathbf{c}_m^k}} = 2{\mathbf{E}_k}\mathbf{E}_k^*\mathbf{c}_m^k - 2{\mathbf{E}_k}{\mathbf{a}_l} = 0
\end{eqnarray}
which leads to:
\begin{eqnarray}
\mathbf{c}{{_{m,\text{opt}}^k}^*} = \mathbf{a}_l^*\mathbf{E}_k^*{\left( {{\mathbf{E}_k}\mathbf{E}_k^*} \right)^{ - 1}}
\end{eqnarray}
Thus, the theorem is proved.
\section{Proof of Theorem 2}
Assume that a central coordinator has access to all relays channels information, or equivalently it knows all the equations coefficients recovered all the relays, and selects simultaneously $L$ linearly-independent equations among them. This method, called optimal centralized CMF, is the asymptotic case of the method proposed in [14] when $T_\text{max}$ goes to infinity with considering only $L$ time slots for relays to destination transmission.
Also consider a modified version of our proposed ICMF method, in which the relays do not use the previously selected and transmitted equations for recovering the current equation, in each stage. It is clear that the performance of original ICMF is better than modified ICMF. Note that for the modified ICMF the rate of equation recovered in step $k$ in relay $m$, i.e., $r_m^k$ reduces from (27) to (7).

Here, we prove that the modified ICMF method achieves the same rate as the optimal centralized CMF. Consider that the all equations coefficients known by the central coordinator are sorted in descending order in terms of their computation rates defined in (7), and are denoted by ECVs $\mathbf{c}_k$ and their corresponding computation rates $R(\mathbf{c}_k), \forall k$. The central coordinator searches among all the possible $L$-independent equations combinations, i.e. sets $u$, to choose linearly-independent ECVs $\mathbf{f}_1,...,\mathbf{f}_L$ with maximum overall rate, similar to the processes described in section II, as follows:
\begin{eqnarray}
\left\{ {{\mathbf{f}_1}, \ldots ,{\mathbf{f}_L}} \right\} = arg\mathop {\max }\limits_u {\rm{min}}{\left\{ {R\left( {{\mathbf{c}_{{u_1}}}} \right), \ldots ,R\left( {{\mathbf{c}_{{u_L}}}} \right)}\right\}}
\end{eqnarray}
On the other hand, we have two steps in modified ICMF: first, at each stage, for each relay, the best equation that is independent of the previously selected and transmitted equations is found. Then, using timer setting method, the best equation among the ones recovered by the relays at that stage is selected. This procedure continues until $L$ linearly independent equations ${\mathbf{e}_1,....,\mathbf{e}_L}$ are selected. Hence, when the inter-relays links are strong enough such that the computation rates of all previously transmitted equations at the relays are not decreased, i.e., $r_{mnj}$ in (16) for all $j$ and $m$ is equal or greater than $r_m^k$ in (7), for the last selected equation in modified ICMF, we have
\begin{eqnarray}
{\mathbf{e}_L} = \arg {\max _{m = 1, \ldots ,M}}{\rm{ma}}{{\rm{x}}_{{\mathbf{c}_l^m \independent \left\{{\mathbf{e}_1,...,\mathbf{e}_{L-1}}\right\}}}}{R\left( {\mathbf{c}_l^m} \right)} 
\end{eqnarray}
It is clear that we can rewrite (81) as
\begin{eqnarray}
{\mathbf{e}_L} = \arg {\max _{{\mathbf{c}_k} \independent \left\{ {{\mathbf{e}_1}, \ldots ,{\mathbf{e}_{L - 1}}} \right\}}}R\left( {{\mathbf{c}_k}} \right)
\end{eqnarray}
Now suppose that the optimal approach has a rate higher than the modified ICMF, i.e., we have 

$\text{min}\left\{ {R\left( {{\mathbf{f}_{1}}} \right), \ldots ,R\left( {{\mathbf{f}_{{L}}}} \right)} \right\} > R(\mathbf{e}_L) $. From (82), $\mathbf{e}_L$ is the ECV with the highest rate among all the ECVs that are linearly independent of ${\mathbf{e}_1,...,\mathbf{e}_{L-1}}$. This implies that any ECV with a rate higher than $R(\mathbf{e}_L)$ is linearly dependent to the set ${\mathbf{e}_1,...,\mathbf{e}_{L-1}}$. As a result, we have
\begin{eqnarray}
\left\{ {{\mathbf{f}_1}, \ldots ,{\mathbf{f}_{L}}} \right\} \in \text{span}\left\{ {{\mathbf{e}_1}, \ldots ,{\mathbf{e}_{L - 1}}} \right\}
\end{eqnarray}
Which indicates that the dimension of the space spanned by vectors $\mathbf{f}_1,...,\mathbf{f}_L$ is at most $L-1$, But this contradicts the assumption of linear-independency of these equations. Hence, the modified ICMF achieves the same rate as the optimal centralized CMF, without requiring to collect all equations recovered by all relays in a coordinator center. Moreover, since original ICMF, in each stage, takes advantage of the previously recovered equations in decoding of current equation, it can achieve a rate higher than the modified ICMF and optimal centralized CMF.

\section{Proof of Theorem 3}
From (60), the optimum values are obtained by minimizing the following function:
\begin{eqnarray}
f\left( {{\mathbf{b}_l},{\mathbf{c}_l}} \right) &=& \frac{1}{{\text{SNR}_T}}{\left| {\left| {{\mathbf{b}_l}} \right|} \right|^2} + \frac{1}{{\text{SNR}_T}}{\left| {\left| {{\mathbf{F}^{\text{AF}}}^*{\mathbf{b}_l}} \right|} \right|^2} + {\left| {\left| {{\mathbf{G}^*}{\mathbf{b}_l} + {\mathbf{D}^*}{\mathbf{c}_l} - {\mathbf{a}_l}} \right|} \right|^2}\nonumber\\ &=& \frac{1}{{\text{SNR}_T}}\mathbf{b}_l^*{\mathbf{b}_l} + \frac{1}{{\text{SNR}_T}}\mathbf{b}_l^*{\mathbf{F}^{\text{AF}}}{\mathbf{F}^{\text{AF}}}^*{\mathbf{b}_l} + {\left( {{\mathbf{G}^*}{\mathbf{b}_l} + {\mathbf{D}^*}{\mathbf{c}_l} - {\mathbf{a}_l}} \right)^*}\left( {{\mathbf{G}^*}{\mathbf{b}_l} + {\mathbf{D}^*}{\mathbf{c}_l} - {\mathbf{a}_l}} \right)\nonumber\\ &=& \frac{1}{{\text{SNR}_T}}\mathbf{b}_l^*\left( {\mathbf{I} + {\mathbf{F}^{\text{AF}}}{\mathbf{F}^{\text{AF}}}^*} \right){\mathbf{b}_l} + \mathbf{b}_l^*\mathbf{G}{\mathbf{G}^*}{\mathbf{b}_l} + 2\mathbf{b}_l^*\mathbf{G}{\mathbf{D}^*}{\mathbf{c}_l} - 2\mathbf{b}_l^*\mathbf{G}{\mathbf{a}_l}\nonumber\\ &+& \mathbf{c}_l^*\mathbf{D}{\mathbf{D}^*}{\mathbf{c}_l} - 2\mathbf{c}_l^*\mathbf{D}{\mathbf{a}_l} + \mathbf{a}_l^*{\mathbf{a}_l}
\end{eqnarray}
 From the definition of $\mathbf{G}$ in (56), we easily obtain:
\begin{eqnarray}
\mathbf{b}_l^*\mathbf{G}{\mathbf{D}^*}{\mathbf{c}_l} = 0
\end{eqnarray}
Hence, we have
\begin{eqnarray}
f = \mathbf{b}_l^*\left( {\frac{1}{{\text{SNR}_T}}\left( {\mathbf{I} + {\mathbf{F}^{\text{AF}}}{\mathbf{F}^{\text{AF}}}^*} \right) + \mathbf{G}{\mathbf{G}^*}} \right){\mathbf{b}_l} - 2\mathbf{b}_l^*\mathbf{G}{\mathbf{a}_l} + \mathbf{c}_l^*\mathbf{D}{\mathbf{D}^*}{\mathbf{c}_l} - 2\mathbf{c}_l^*\mathbf{D}{\mathbf{a}_l} + \mathbf{a}_l^*{\mathbf{a}_l}
\end{eqnarray}
The optimum value of $\mathbf{b}_l$ is the solution of
\begin{eqnarray} 
\frac{{\partial f\left( {{\mathbf{b}_l},{\mathbf{c}_l}} \right)}}{{\partial {\mathbf{b}_l}}} = 2\left( {\frac{1}{{\text{SNR}_T}}\left( {\mathbf{I} + {\mathbf{F}^{\text{AF}}}{\mathbf{F}^{\text{AF}}}^*} \right) + \mathbf{G}{\mathbf{G}^*}} \right){\mathbf{b}_l} - 2\mathbf{G}{\mathbf{a}_l} = 0
\end{eqnarray}
Hence,
\begin{eqnarray}
\mathbf{b}_{\text{opt},l}^* = \mathbf{a}_l^*{\mathbf{G}^*}{\left( {\frac{1}{{\text{SNR}_T}}\left( {\mathbf{I} + {\mathbf{F}^{\text{AF}}}{\mathbf{F}^{\text{AF}}}^*} \right) + \mathbf{G}{\mathbf{G}^*}} \right)^{ - 1}}
\end{eqnarray}
In a similar way, the optimum value of $\mathbf{c}_l$ is found from the solution of  
\begin{eqnarray}
\frac{{\partial f\left( {{\mathbf{b}_l},{\mathbf{c}_l}} \right)}}{{\partial {\mathbf{c}_l}}} = 2\mathbf{D}{\mathbf{D}^*}{\mathbf{c}_l} - 2\mathbf{D}{\mathbf{a}_l} = 0
\end{eqnarray}
which leads to
\begin{eqnarray}
\mathbf{c}_l^* = \mathbf{a}_l^*{\mathbf{D}^*}{\left( {\mathbf{D}{\mathbf{D}^*}} \right)^{ - 1}}
\end{eqnarray}
Thus, the theorem is proved.

\end{document}